\newtheorem{definition}{Definition}
\newenvironment{proof}{\begin{IEEEproof}}{\end{IEEEproof}}
\newtheorem{strategy}{Strategy}    % gws
\newtheorem{lemma}{Lemma}  %gws
\newtheorem{property}{Property}
\begin{document}
% paper title
\title{Flexible Pattern Discovery and Analysis}

\author{Chien-Ming Chen,~\IEEEmembership{Senior Member,~IEEE,} 
	Lili Chen, Wensheng Gan*,~\IEEEmembership{Member,~IEEE}

\thanks{This research was partially supported by National Natural Science Foundation of China (Grant No. 62002136), Guangzhou Basic and Applied Basic Research Foundation (Grant No. 202102020277). (Corresponding author: Wensheng Gan)}

	\thanks{Chien-Ming Chen and Lili Chen are with the College of Computer Science and Engineering, Shandong University of Science and Technology, Qingdao 266590, China. (E-mail: chienmingchen@ieee.org and lilichien3@gmail.com)}

	\thanks{Wensheng Gan is with the College of Cyber Security, Jinan University, Guangzhou 510632, China. (E-mail: wsgan001@gmail.com)}
	
}

% make the title area
\maketitle

\begin{abstract}
	Based on the analysis of the proportion of utility in the supporting transactions used in the field of data mining, high utility-occupancy pattern mining (HUOPM) has recently attracted widespread attention. Unlike high-utility pattern mining (HUPM), which involves the enumeration of high-utility (e.g., profitable) patterns, HUOPM aims to find patterns representing a collection of existing transactions. In practical applications, however, not all patterns are used or valuable. For example, a pattern might contain too many items, that is, the pattern might be too specific and therefore lack value for users in real life. To achieve qualified patterns with a flexible length, we constrain the minimum and maximum lengths during the mining process and introduce a novel algorithm for the mining of flexible high utility-occupancy patterns. Our algorithm is referred to as HUOPM$^+$. To ensure the flexibility of the patterns and tighten the upper bound of the utility-occupancy, a strategy called the length upper-bound (LUB) is presented to prune the search space. In addition, a utility-occupancy nested list (UO-nlist) and a frequency-utility-occupancy table (FUO-table) are employed to avoid multiple scans of the database. Evaluation results of the subsequent experiments confirm that the proposed algorithm can effectively control the length of the derived patterns, for both real-world and synthetic datasets. Moreover, it can decrease the execution time and memory consumption.
\end{abstract}

% Note that keywords are not normally used for peerreview papers.
\begin{IEEEkeywords}
	utility mining, utility-occupancy, length-constraint, flexible patterns.
\end{IEEEkeywords}

\IEEEpeerreviewmaketitle

\section{Introduction}

\IEEEPARstart{T}{he} initial motivation for frequent pattern mining (FPM) was to analyze the shopping behavior of customers using transactional databases and recommend frequently purchased patterns to customers \cite{agrawal1994fast,han2000mining,chen1996data,mannila1997discovery,goethals2003survey}. In this case, researchers believed that the item is binary and whether an item appears in a transaction is considered primary. However, frequent purchase patterns are occasionally less profitable than infrequent purchase patterns with high profits, which poses a fundamental problem. Hence, the discovery of high-utility patterns that consider not only the internal utility (e.g., quantity) but also the external utility (e.g., profit, interest, or weight) \cite{shen2002objective,chan2003mining,yao2006mining,yao2006unified} has gained substantial research attention. Moreover, a framework called high-utility pattern mining (HUPM) \cite{gan2021survey,ahmed2009efficient} was proposed to address this practical issue. In contrast with frequent pattern mining (FPM), the lack of a downward closure property makes HUPM more difficult and intractable.

Up until now, numerous approaches and strategies have been designed to increase the efficiency and convenience of mining high-utility patterns \cite{li2008isolated,liu2005two}. These methods can be broadly divided into three major categories. The first category involves the candidate generation-and-test approach \cite{yao2006mining,li2008isolated,liu2005two}: The eligible patterns are selected as candidates based on an upper-bound evaluation and then calculated and tested to determine whether they are qualified. The second category entails tree-based algorithms \cite{ahmed2009efficient,ahmed2011huc,tseng2012efficient,tseng2010up}. The necessary information is projected onto a tree in the database, which can avoid multiple traversals of the database. The third category employs vertical data structures (e.g., utility-lists) \cite{lin2016fhn,fournier2014fhm}. Similar to the tree-based approach, crucial information is stored in a vertical data structure, and the utility of any pattern can be calculated using this structure. HUPM has dozens of practical applications in real life, such as gene regulation \cite{zihayat2017mining} and web click-stream analysis \cite{li2008fast,shie2010online}. A challenging problem arises in selecting a pattern to represent the transactions.

To the best of our knowledge, only a few studies have been carried out on this problem. The concept of occupancy \cite{tang2012incorporating} was originally defined as the share of the number of items to one of the transactions in which the item existed. This requires the patterns to occupy the majority of the supporting transactions. In addition, an occupancy measurement can be widely used in pattern recommendation. When a user browses a website, if the number of times the user clicks on a URL is greater than a certain threshold, then the URL has a high occupancy pattern. However, if this user browses for a long time at a URL that is not frequently clicked, the URL may be valuable. Hence, Shen \textit{et al.} \cite{shen2016ocean} incorporated occupancy and utility to create an original concept called the utility-occupancy and designed an algorithm called OCEAN. Although this algorithm is novel, it has a limitation in that it may exclude certain patterns that should be qualified. To address this drawback, Gan \textit{et al.} \cite{gan2019huopm} proposed an efficient algorithm called high utility-occupancy pattern mining. Taking advantage of two compact list structures, namely, a utility-occupancy list and a frequency-utility-occupancy table, respectively, HUOPM can reduce the running time and memory consumption, which are two of its principal merits. It is clear that utility-occupancy has a wide applicability in an information-driven society. For instance, during a holiday, tourists may want to go out but not know where to visit. They can then check the travel route recommendation, which analyzes and calculates the utility-occupancy of the tourist route, and finally recommends the best route for the tourists.

During the process of pattern mining, researchers tend to extract all patterns. However, they may not necessarily be useful to actual production or management. For example, supermarket managers generally display milk and bread together, but they rarely bundle \{milk, bread, strawberry jam, and gum\}. Although both sets are elegant patterns selected by the algorithm, the shorter one is obviously more popular with decision makers. In previous studies, Pei \textit{et al.} \cite{pei2002constrained} applied length constraints on frequent pattern mining by appending no items after the patterns. Next, Fournier-Viger \textit{et al.} \cite{fournier2016fhm} proposed the FHM+ algorithm focusing on utility, which improved the FHM by incorporating the length constraints into utility mining. This narrows the upper bound of the patterns and further trims the search space. Nevertheless, no method has been proposed in the field of utility-occupancy to address the problem of a length constraint.

In this study, we focus on mining flexible high utility-occupancy patterns by developing a novel algorithm called HUOPM$^+$. The proposed algorithm is dedicated to discovering high utility-occupancy patterns with length constraints; in addition, it is a generic framework (or called an extension) of the state-of-the-art HUOPM algorithm. The major contributions of this paper are briefly summarized as follows.

\begin{itemize}
	\item A generic and practical algorithm is proposed to exploit flexible high utility-occupancy patterns. During the execution of this algorithm, the minimum and maximum lengths are needed in advance to determine the length range of the derived patterns.
	
    \item To avoid scanning the database multiple times, two compact data structures, called a utility-occupancy nested list (UO-nlist) and a frequent-utility-occupancy table (FUO-table), are constructed to store vital information from the databases.
    
    \item It is recommended to tighten the upper bound with the newly designed LUB, which is less than the original upper-bound in the HUOPM algorithm.
    
	\item Subsequent experiments have been carried out on both real-world and synthetic datasets, the results show that all patterns with length constraints can be obtained and that the efficiency of the proposed algorithm is significantly high in terms of the execution time and memory consumption.
\end{itemize}

The remainder of this paper is broadly organized as follows. Related studies are introduced in Section \ref{sec:2}. In Section \ref{sec:background}, some fundamental knowledge regarding this study is presented. The presented HUOPM$^+$ algorithm and three novel pruning strategies are detailed in Section \ref{sec:4}. In Section \ref{sec:experiments}, subsequent experiments confirming the effectiveness and efficiency of the proposed algorithm are described. Finally, Section \ref{sec:conclusion} provides some concluding remarks regarding this research.

\section{Related Studies}
\label{sec:2}

The studies related to the HUOPM$^+$ algorithm mainly deal with three aspects, i.e., high-utility pattern mining, high utility-occupancy pattern mining, and flexible pattern mining, which are discussed below.

\subsection{High-Utility Pattern Mining}

Thus far, numerous studies have been carried out on HUPM, which aims at miming qualified patterns whose utilities are greater than or equal to a predefined minimum utility threshold. Because HUPM provides guidance for many applications such as decision-making, it has attracted significant attention. HUPM was initially proposed in \cite{yao2004foundational}. Each pattern has two aspects, one being the quantity of items whose alias is an internal utility, and the other being a unit utility (e.g., profit), which is defined by experts as an external utility. If any patterns satisfy the minimum utility threshold, they can be derived as high-utility patterns. HUPM is technically more challenging than FPM because FPM has a downward closure property, whereas HUPM does not. To overcome this critical issue, Liu \textit{et al.} \cite{liu2005two} creatively proposed a two-phase algorithm and established a transaction-weighted utilization (TWU) model by adopting the anti-monotonic property of a transaction-weighted utility. Liu \textit{et al.} \cite{liu2012mining} then presented HUI-Miner, achieving a better performance than previous approaches. This algorithm scans the transaction database twice to construct the initial utility-list, and thus there is no longer a need to access the database. The utility-list contains the utility and the remaining utility of the patterns, which is a necessary condition for calculating the upper bound of the extended patterns. The above algorithms are all itemset-based utility mining algorithms, although different types of data also exist. Utility mining of temporal data \cite{lin2015efficient}, uncertain data \cite{lin2016efficient}, dynamic data \cite{lin2015incremental,yun2019efficient,nguyen2019mining}, sequence data \cite{gan2020proum,gan2020fast}, and other factors are all interesting research directions, as highlighted in a literature review \cite{gan2021survey}.

\subsection{High Utility-Occupancy Pattern Mining}

In terms of the contribution ratio of a pattern, the above presented utility approaches are useless. Thus, it is sensible to introduce a new definition, i.e., occupancy. The initial concept of occupancy, whereby occupancy is defined as the number of occurrences of a pattern, was designed by Tang \textit{et al.} \cite{tang2012incorporating}. Unfortunately, this method is unsuitable for research on utility mining. Subsequently, Shen \textit{et al.} \cite{shen2016ocean} started conducting research on utility-occupancy and developed a representative algorithm called OCEAN to find patterns whose share of utility in the supporting transactions is greater than a specific value. However, the OCEAN algorithm fails to discover complete high utility-occupancy patterns.  Gan \textit{et al.} \cite{gan2019huopm} proposed a successful and efficient algorithm called HUOPM to address this disadvantage. Two compact data structures are applied to deposit the vital data in the database. A utility-occupancy list (UO-list) is used to store the utility and the remaining utility of each pattern. Each entry records the details of a transaction. An FUO-table is then obtained for the convenience of integrating and revising the information in the UO-list. HUOPM just focuses on precise data, thus Chen \textit{et al.} \cite{chen2021discovering} recently extended HUOPM to deal with the uncertainty in uncertain data. However, these algorithms can not discover flexible patterns based on various constraints.

\subsection{Flexible Pattern Mining}

Through this research, we found that the patterns adopted by managers generally have a shorter length and that longer patterns are not universal because they are too specific. Hence, it is advisable for users to pursue a flexible algorithm. According to the demand, the generation of a large number of patterns considerably decreases the efficiency of the mining. Pei et al. \cite{pei2002constrained} emphasized adding a constraint-based approach during frequent pattern mining. The authors mainly adopted a pattern-growth method, which is applicable to various constraints. When this method is applied to the field of utility mining, however, it is too superficial and does not penetrate into the interior of the utility mining algorithms. FHM$^+$ \cite{fournier2016fhm} has an interesting feature and it discovers high-utility item sets with length constraints. The authors considered the maximum length of the patterns as the dominant control parameter, and designed length upper-bound reduction (LUR) to prune the search space. Several studies regarding flexible sequential pattern mining such as CloSpan \cite{yan2003clospan}, BIDE \cite{wang2004bide}, and MaxFlex \cite{arimura2007mining} have also been conducted to meet the requirements of different applications.

%%%%%%%%%%%%%%%%%%%%%%%%%%%%%%%%%%%%%%%%%%%
%%%%%%%%%%%%%%%%%%%% PRELIMINATIRES %%%%%%%%%%%%%%%
%%%%%%%%%%%%%%%%%%%%%%%%%%%%%%%%%%%%%%%%%%%

\section{Preliminary and Problem Statement} 
\label{sec:background}

To assist with this discussion, in this section, some symbols in connection with the proposed algorithm are introduced and defined. Provided that there are several items in a transaction database, and let $I$ = \{$i_{1}$, $i_{2}$, $\ldots$, $i_{m}$\} be a collection of m distinct items, if an itemset contains \textit{k} distinct items, it can then be denoted as $k$-itemset. A transaction database $\mathcal{D}$ consists of n transactions in which $\mathcal{D}$ = \{$T_{1}$, $T_{2}$, $\ldots$, $T_{n}$\}. Each transaction holds its own particular identifier \textit{tid} and is a subset of $I$. Every transaction includes three parts, i.e., the transaction identifier \textit{tid}, item name, and number of purchases of the corresponding item. Next, we take advantage of the transaction database in TABLE \ref{table:db1} and the profit presentation of each item in TABLE \ref{table:profit} as a running example.

\begin{table}[!htbp]
	\centering
	\small
	\caption{Transaction database.}
	\label{table:db1}
	\begin{tabular}{|c|c|c|c|}
		\hline
		\textbf{\textit{tid}} & \textbf{Transaction (item, quantity)} & \textbf{\textit{tu}} \\ \hline \hline
		$ T_{1} $ & 	\textit{a}:3, \textit{b}:4, \textit{c}:2, \textit{d}:6, \textit{e}:2  &  \$63 \\ \hline
		$ T_{2} $ & 	\textit{a}:7, \textit{b}:4, \textit{c}:1, \textit{e}:2   & \$62 \\ \hline
		$ T_{3} $ &	    \textit{a}:5, \textit{b}:2, \textit{e}:1  &  \$35 \\ \hline
		$ T_{4} $ &	    \textit{b}:4, \textit{c}:1, \textit{d}:2  &  \$25 \\ \hline
		$ T_{5} $ &	    \textit{a}:2, \textit{d}:4 &   \$14  \\ \hline
		$ T_{6} $ &	     \textit{a}:2, \textit{b}:2, \textit{c}:6, \textit{d}:4, \textit{e}:3  &  \$60 \\ \hline
		$ T_{7} $ &	     \textit{a}:1, \textit{b}:2 &  \$13 \\ \hline
		$ T_{8} $ &	     \textit{d}:3  &  \$6 \\ \hline
		$ T_{9} $ &  	\textit{b}:3, \textit{c}:5, \textit{d}:2, \textit{e}:5 &  \$74 \\ \hline
		$ T_{10} $	&	\textit{b}:3, \textit{e}:5 &  \$65 \\ \hline	
	\end{tabular}
\end{table}

\begin{table}[!htbp]
	\centering
	\small
	\caption{Unit utility of each item}
	\label{table:profit}
	\begin{tabular}{|c|c|c|c|c|c|}
		\hline
		\textbf{Item}    &  $ a $  &  $ b $  &  $ c $   & $ d $  & $ e $  \\ \hline
		\textbf{Utility (\$)} &  $ 3 $  &  $ 5 $  &  $ 1 $  & $ 2 $  & $ 10 $ \\ \hline
	\end{tabular}
\end{table}

\begin{definition}
	\label{def_1}
	\rm The number of itemsets $X$ contained in a database is usually defined as \textit{support count} \cite{agrawal1994fast}, which can be denoted as \textit{SC(X)}. Given a \textit{support threshold} $\alpha$ ($ 0 < \alpha \leq 1 $), if $SC(X) \geq$ $\alpha$  $\times$ $|D|$, we can determine that $X$ is a frequent pattern. The collection of transactions containing $X$ is expressed as $ \varGamma_X$. That is, if an itemset $X$ appears in the transaction $T_q$, then $T_q$ belongs to $ \varGamma_X$, and naturally, $ SC(X)$ = $|\varGamma_X| $.
\end{definition}

As shown in TABLE \ref{table:db1}, the pattern $ac$ appears in transactions $T_1$, $T_2$, and $T_6$. Therefore, $SC (ac) $ = 3 and $\varGamma_{(ac)}$ = $ \{T_1$, $T_2$, $T_6\} $. Assuming that the value of $\alpha$ is 0.3, $SC(ac)$  $\geq $ $\alpha$ $\times$ $|D| $, and thus $ac$ is frequent. 

In a market basket analysis, based on the preference of the customers for the products or an evaluation of experts regarding the items, each item in the database is associated with a positive number, or in other words, a unit profit. 
 
\begin{definition}
	\label{def_4}
	\rm We define the \textit{internal utility} of an item $i$ in transaction $T_q$ as \textit{iu(i, $T_q$)}, which refers to the number of occurrences of the corresponding item. We define the \textit{external utility} of an item $i$ in the database as \textit{eu(i)}, namely, the unit utility of $i$ in TABLE \ref{table:profit}, which is generally set subjectively.
\end{definition}

\begin{definition}
	\label{def_5}
	\rm  The utility of an item $i$ in the supporting transaction $T_{q}$ is defined as $u(i, T_{q})$ = $iu(i, T_{q})$ $\times$ $eu(i)$. Moreover, the utility of a pattern $X$ in a transaction $T_{q}$ is equal to the total utility of each item in the pattern and is represented as $u(X, T_{q})$ = $\sum _{i \in X  \wedge X \subseteq T_{q}} u(i, T_{q}) $. Hence, the utility of $X$ existing in a transaction database $\mathcal{D}$ is denoted as $u(X)$ = $\sum_{X\subseteq T_{q}\wedge T_{q} \in \mathcal{D}} u(X, T_{q}) $. A summary of all utilities in a transaction is recorded as the transaction utility (\textit{tu}).
\end{definition}

Let us take $e$ and $ac$ as examples for an easier understanding of the utility calculation. Here, $ u(e) $ = $ u(e, T_1) $ + $ u(e, T_2) $ + $ u(e, T_3) $ + $ u(e, T_{6})$ + $ u(e, T_{9})  $ + $ u(e, T_{10})$ = \$20 + \$20 + \$10 + \$30 + \$50 + \$50 = \$180, $ u(ac) $ = $ u(ac, T_1) $ + $ u(ac, T_2) $ + $ u(ac, T_6) $  = \$11 + \$22 + \$12 = \$45. Each transaction utility refers to the last column of TABLE \ref{table:db1}.

The \textit{occupancy} was originally proposed to measure the proportion of pattern $X$ in the supporting transactions \cite{tang2012incorporating}. In the field of utility mining, it is crucial to discover HUOPs \cite{shen2016ocean,gan2019huopm}.

\begin{definition}
	\label{def_7}
	\rm  Assume $X$ is present in transaction $T_q$, and the utility-occupancy of $X$ in supporting transaction $T_q$ is defined as follows: 
	\begin{equation}
	uo(X, T_q) = \dfrac{u(X, T_q)}{tu(T_q)}.
	\end{equation}

	Assume $\varGamma_X$ is a collection of all transactions containing pattern $X$. The utility-occupancy of a pattern in a database is calculated as follows:
	\begin{equation}
		uo(X) = \dfrac{\sum_{X \subseteq T_q \wedge T_q \in \mathcal{D}}uo(X,T_q)}{|\varGamma_X|}.
\end{equation}  
\end{definition}

 \begin{definition}
	\label{def_9}
	\rm Let there be a transaction database $\mathcal{D}$. A pattern $X$ is denoted as a HUOP if and only if $SC(X)$ $\geq$ $\alpha$ $\times$ $|\mathcal{D}|$ and $ uo(X)$ $\geq$ $\beta $, where $ \alpha $ ($ 0 < \alpha \leq 1 $) is the predefined minimum support threshold and  $ \beta $ ($ 0 < \beta \leq 1 $) is the predefined minimum utility-occupancy threshold.
\end{definition}

In a previous study, the full transaction utility and support were calculated. It is convenient to obtain $SC(ac)$ = 3, $tu(T_1)$ = \$63, $tu(T_2)$ = \$62, and $tu(T_6)$ = \$60.	Therefore, $ uo(ac, T_1) $ is calculated as \$11/\$63 $\approx $ 0.1746. Similarly, $ uo(ac, T_2) $ and  $ uo(ac, T_6) $ are calculated as 0.3548 and 0.2, respectively. Furthermore, $uo(ac)$ = \{0.1746 + 0.3548 + 0.2\} /3 $\approx $ 0.2431. Provided that the value of $\alpha$ and $\beta$ is 0.3, the pattern $ac$ is not a HUOP.

There have been dozens of studies conducted on HUOPM; however, one outstanding issue of HUOPM algorithms is that they normally focus on discovering massive patterns containing numerous items. As we discussed earlier, these patterns might not work for users because they usually represent an unusual situation. To increase the utility of the discovered patterns, we address the issue of mining flexible high utility-occupancy patterns with a length constraint. The minimum length \textit{minlen} and the maximum length \textit{maxlen} of the required patterns are predefined.

\begin{definition}
	\label{def_13}
	\rm (\textit{Flexibly mining high utility-occupancy patterns}) The flexibly mining of high utility-occupancy patterns aims to discover HUOPs containing up to \textit{maxlen} items and at least \textit{minlen} items.
\end{definition}

\textbf{Problem Statement.} Consider the existence of a given transaction database $\mathcal{D}$, a utility-table recording the unit utility of each item, and four input parameters ($\alpha$, $\beta$, \textit{minlen}, and \textit{maxlen}) used as the mining constraints. The purpose of this study is to mine flexible eligible patterns with the length being at least \textit{minlen} and at most \textit{maxlen}, under the condition that not only is the support count greater than or equal to the minimum support count threshold $\alpha$ $\times$ $|D|$, the value of the utility-occupancy is also no less than the minimum utility-occupancy threshold $\beta$ $\times$ $|\mathcal{D}|$.

Assuming that \textit{minlen} = 1 and \textit{maxlen} = 3, the length of all derived patterns should range from 1 and 3. Thus, although \{$caeb$\} is a HUOP, it is not the desired pattern because its length exceeds \textit{maxlen}. In addition, assuming that the value of $\alpha$ and $\beta$ are both 0.3, all of the desired patterns can be obtained from Table \ref{table:patterns}.

\begin{table}[!htbp]
	\centering
	\small
	\caption{Desired patterns with length constraints}
	\label{table:patterns}
	\begin{tabular}{|c|c|c|c|c|c|}
		\hline
		\textbf{HUOP}    &  \textbf{\textit{sup}}  &  \textbf{\textit{uo}}  &   \textbf{HUOP}    &  \textbf{\textit{sup}}  &  \textbf{\textit{uo}}   \\ \hline \hline
		$d$ &   6   &   0.3515   & $db$ &   4   &   0.5062   \\ \hline
		$e$ &   6   &   0.4784   &  $ eb $  &  6  &  0.7328  \\ \hline
		$b$ &   8   &   0.3869   &  $ cae $  &  3   &  0.6232  \\ \hline
		$ce$ &   4   &   0.5078   &  $ cab $  &  3   &  0.5120  \\ \hline
		$cb$ &   5   &   0.4130   &  $ cde $  &  3   &  0.6901  \\ \hline
		$ad$ &   3   &   0.5222   &  $ cdb $  &  4   &  0.5660  \\ \hline
		$ae$ &   4   &   0.6090   &  $ ceb $  &  4   &  0.7601  \\ \hline
		$ab$ &   5   &   0.6205   &  $ aeb $  &  4   &  0.8821  \\ \hline  
		$de$ &   3   &   0.6237   &  $ deb $  &  3   &  0.8526  \\ \hline
	\end{tabular}
\end{table}

\section{Proposed Flexible HUOPM$^+$ Algorithm} 
\label{sec:4}

In this section, some novel definitions based on two length constraints are first describe. Then, according to these definitions and combined with the utility-list \cite{liu2012mining}, two novel data structures are further constructed. A UO-nlist and a FUO-table are designed to maintain the information regarding the given database. In addition, to avoid an exhaustive search, several pruning strategies are proposed to further narrow the upper bound of utility-occupancy on the subtree nodes in a $SC$-tree, the definition of which is introduced in the next subsection. Finally, the designed algorithm should be described with the help of a pseudocode. The detailed flowchart of the proposed HUOPM$^+$ algorithm is shown in Fig. \ref{fig:flowchart}.

\begin{figure*}[!htbp]
	\centering
	\includegraphics[scale=0.48]{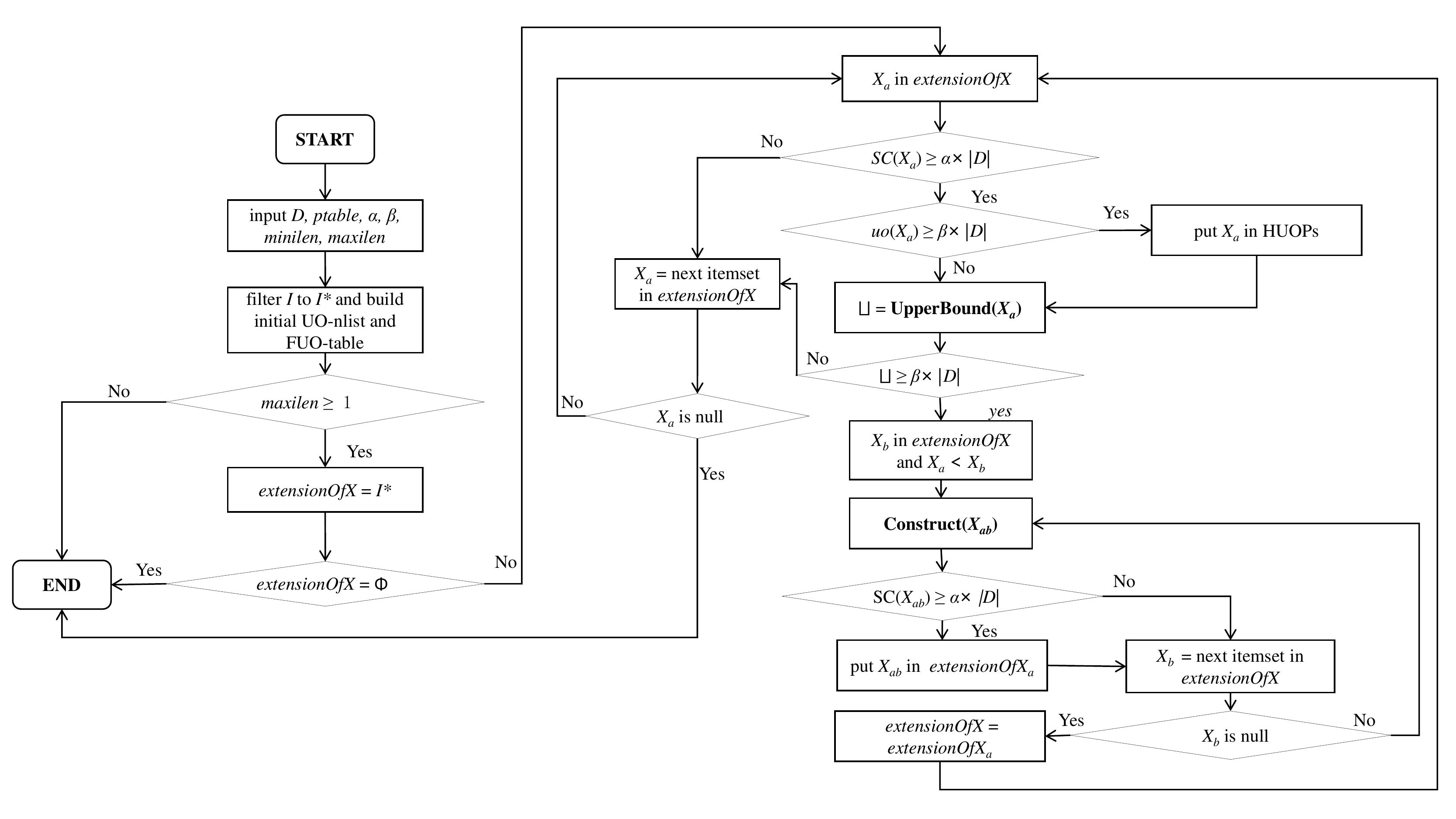}
	\caption{Flowchart of the proposed HUOPM$^+$ algorithm.}
	\label{fig:flowchart}
\end{figure*}

\subsection{Revised Remaining Utility-Occupancy}

As is well-known, the HUOPM algorithm \cite{gan2019huopm} adopts a depth-first search strategy. Thus, an intuitive method for controlling the length of the discovered patterns is to output only those patterns with no less than the minimum length and to stop the extension, where the number of items of a pattern is equal to the established maximum length. Nevertheless, the advantages of this approach are not obvious and are too superficial, which may fail to decrease the number of visited nodes within the specified length of range. As the reason behind this approach, it is unable to reduce the upper bound of the utility-occupancy of the patterns and prune the search range. To handle this drawback, we developed a novel strategy called the LUB, which provides a revision of the remaining utility-occupancy for discovering the HUOPs. A description of this approach is detailed below.

%move the original definition 10-12 to  IV

When we run the program, we should traverse the items in each transaction in a certain order. Without a loss of generality, in this study, we take the support of the ascending order as the arrangement and denote it as $\prec$. For example, from the database shown in TABLE \ref{table:db1}, we can easily see that $SC(c)$ $<$ $ SC(a)$ $\leq $ $SC(d)$ $\leq$ $SC(e)$ $<$ $SC(b)$. Therefore, the support-ascending order is $ c \prec a \prec d \prec e \prec b $. TABLE \ref{table:db} shows the revised database modified from TABLE \ref{table:db1} according to the order of $\prec$.

\begin{table}[!htbp]
	\centering
	\small
	\caption{Revised transaction database.}
	\label{table:db}
	\begin{tabular}{|c|c|c|c|}
		\hline
		\textbf{\textit{tid}} & \textbf{Transaction (item, quantity)} & \textbf{\textit{tu}} \\ \hline  \hline
		$ T_{1} $ & 	\textit{c}:2, \textit{a}:3, \textit{d}:6, \textit{e}:2, \textit{b}:4    &  \$63 \\ \hline
		$ T_{2} $ & 	\textit{c}:1, \textit{a}:7, \textit{e}:2, \textit{b}:4     & \$62 \\ \hline
		$ T_{3} $ &	    \textit{a}:5, \textit{e}:1, \textit{b}:2   &  \$35 \\ \hline
		$ T_{4} $ &	    \textit{c}:1, \textit{d}:2, \textit{b}:4    &  \$25 \\ \hline
		$ T_{5} $ &	    \textit{a}:2, \textit{d}:4 &   \$14  \\ \hline
		$ T_{6} $ &	    \textit{c}:6, \textit{a}:2, \textit{d}:4, \textit{e}:3, \textit{b}:2   &  \$60 \\ \hline
		$ T_{7} $ &	     \textit{a}:1, \textit{b}:2 &  \$13 \\ \hline
		$ T_{8} $ &	     \textit{d}:3  &  \$6 \\ \hline
		$ T_{9} $ &  	\textit{c}:5, \textit{d}:2, \textit{e}:5, \textit{b}:3   &  \$74 \\ \hline
		$ T_{10} $	&	\textit{e}:5, \textit{b}:3  &  \$65 \\ \hline	
	\end{tabular}
\end{table}

\begin{definition}
	\label{def_11}
	\rm According to the order of $\prec$, there may still be some items after pattern $X$ in transaction $T_q$ whose proportion in the transaction is defined as the remaining utility-occupancy (\textit{ruo}) \cite{gan2019huopm}, the formula of which is expressed as follows:
	\begin{equation}
	ruo(X,T_q) = \dfrac{\sum_{ i_j \notin X \wedge X \subseteq T_q \wedge X \prec i_j }u(i_j,T_q)}{tu(T_q)}.
	\end{equation}

	Furthermore, the remaining utility-occupancy of $X$ in a database is defined as follows \cite{gan2019huopm}:
	\begin{equation}
		ruo(X) = \dfrac{\sum_{X \subseteq T_q \wedge T_q \in \mathcal{D}}ruo(X,T_q)}{|\varGamma_X|)}.
	\end{equation}
\end{definition}

For example, as shown in TABLE \ref{table:db}, $ruo(a, T_1)$ = $(u(d, T_1)$ + $u(e, T_1)$ + $u(b, T_1))$/$tu(T_1)$ $ \approx $  0.8254. In addition,  $ruo(a)$ = $(ruo(a, T_1)$ + $ruo(a, T_2)$ + $ruo(a, T_3)$ + $ruo(a, T_5)$ + $ruo(a, T_6)$  + $ruo(a, T_7))$/6 = (0.8254 + 0.6452 + 0.5714 + 0.5714 + 0.8 + 0.7692)/6 = 0.6971. To facilitate the representation and mining of flexible HUOPs, we denote the required maximum length as \textit{maxlen} and the minimum length as \textit{minlen}.

\begin{definition}
	\label{def_14}
	\rm \textit{(Largest utility-occupancy in a transaction with respect to a pattern)}. Let a pattern $X$ exist in a transaction $T_q$ and the maximum length of patterns is set as \textit{maxlen}, and then put all items appearing after $X$ into a list in order of $\prec$ and denote the list as $V(X, T_q)$ = \{$i_1$, $i_2$, $\ldots$, $i_l$\}. Next, calculate the utility-occupancy corresponding to these items and express it as $W(X, T_q)$ = \{$uo(i_1, T_q)$, $uo(i_2, T_q)$, $\ldots$, $uo(i_l, T_q)$\}. The maximum number of items appended to $X$ is calculated as \textit{maxExtendLen} = \textit{maxlen} - $|X|$, where $|X|$ is the length of $X$. Thus, the largest utility-occupancy in transaction $T_q$ in regard to a pattern $X$ is the collection of the \textit{maxExtendLen} largest values in $W(X, T_q)$. For simplicity, we take this as $luo(X, T_q)$.
\end{definition}

For example, consider $a$ in $T_6$ in TABLE \ref{table:db}. Let \textit{maxlen} be 3; the utility-occupancy values of all items after $a$ in transaction $T_1$ can be calculated as \{0.1333, 0.5. 0.1667\}. Thus, \textit{maxExtendLen} = 3 - 1 = 2 and $luo(a, T_6)$ = \{0.5, 0.1667\}.

\begin{definition}
	\label{def_15}
	\rm \textit{(Revised remaining utility-occupancy)} Suppose there is a pattern $X$ in transaction $T_q$ and the maximum length of patterns is set as \textit{maxlen}. To reduce the upper bound on the utility-occupancy of $X$ in $T_q$ with a length constraint, we define the revised remaining utility-occupancy as $\textit{rruo}(X, T_q)$ = $\sum{luo(X, T_q)}$. Furthermore, the revised remaining utility-occupancy of a pattern $X$ in a transaction database $\mathcal{D}$ is calculated as $\textit{rruo}(X)$ =  $\dfrac{\sum_{X \in T_q, T_q \in \mathcal{D}}{\textit{rruo}(X, T_q)}}{|\varGamma_X|}$.
\end{definition}

Similar to the above example, we take a pattern $a$ in transaction $T_6$ as an example. The total utility-occupancy before a revision is 0.8. After optimization, however, the value reaches 0.6667, which is less than the original result. Moreover, in the entire database, the value of $\textit{rruo}(a)$ = 0.64315, which is much less than the former result of 0.6971.

\begin{property}
	\label{pro_1}
	\rm The upper bound of the revised remaining utility-occupancy of a pattern $X$ must be tighter than that of the remaining utility-occupancy.
\end{property}

\begin{proof}
	No matter which transaction $T_q$ pattern $X$ exists in, it is simple to obtain $\textit{rruo}(X, T_q)$ $\leq$ $\textit{ruo}(X, T_q)$. Hence, we can conclude that $\textit{rruo}(X)$ $ \leq $ $\textit{ruo}(X)$, the detailed proof of which is shown below:
	\begin{tabbing}	
		$rruo(X)$ \=
		=  $\dfrac{\sum_{X \in T_q, T_q \in \mathcal{D}}{\textit{rruo}(X, T_q)}}{|\varGamma_X|}$  \\
		\>$ \leq $  $\dfrac{\sum_{X \in T_q, T_q \in \mathcal{D}}{\textit{ruo}(X, T_q)}}{|\varGamma_X|}$ \\
		\> = $ruo(X)$.
	\end{tabbing}	
\end{proof}

We have found that the value of $\textit{rruo}(X)$ is smaller than that of $\textit{ruo}(X)$, and the next step is to determine how this property can be applied to find a tighter upper bound for the extension of pattern $X$. Before that, we should first build a data structure to store necessary information in the database.

\subsection{Revised List Structure for Storing Information}

In the previous sections, we introduced some basic concepts of the utility-occupancy and revised the remaining utility-occupancy. In this subsection, two compact data structures are designed for maintaining the essential information and avoiding going through databases multiple times. These structures are called a UO-nlist and FUO-table, respectively. In addition, as the main meaning of the nested list, there is one list within the larger list. The specific details of this are described below.
 
\begin{definition}
	\label{def_16}
	\rm (\textit{UO-nlist}) The UO-nlist related to a pattern $X$ is composed of several tuples, and one tuple corresponds to one transaction where $X$ occurs. Let $X$ be a pattern appearing in transaction $T_q$. We then define a tuple as consisting of three elements, namely, a transaction identifier $\textit{tid}$, the utility-occupancy of $X$ in $T_q$ (abbreviated as $\textit{uo}(X, T_q)$), and the largest utility-occupancy $\textit{luo}(X, T_q)$. As defined in Definition \ref{def_14}, $\textit{luo}(X, T_q)$ is a list recording a set of the largest occupancy utility of \textit{maxlen} - $|X|$ items after $X$ in $T_q$. Therefore, the UO-nlist can be marked as ($\textit{tid}$, $uo(X, T_q)$, $luo(X, T_q)$).
\end{definition}

For example, consider $ca$ in $T_6$, as shown in TABLE \ref{table:db}, and let \textit{maxlen} be 3. It is possible to obtain $uo(ca, T_6)$ = (\$6 + \$6) / \$60 = 0.2 and \textit{maxlen} - $|ca|$ = 3 - 2 = 1. Next, $luo(ca, T_6)$ = $\{uo(e, T_6)\}$ = $\{0.5\}$. Thus, UO-nlist of $ca$ in $T_6$ is \{$T_6$, 0.2, \{0.5\}\}. After scanning the database once, the UO-nlist of each item is constructed. For more details, refer to Fig. \ref{fig:UO-nlist}, which shows the support-ascending order of each item.

\begin{figure}[!htbp]
	\centering
	\includegraphics[scale=0.55]{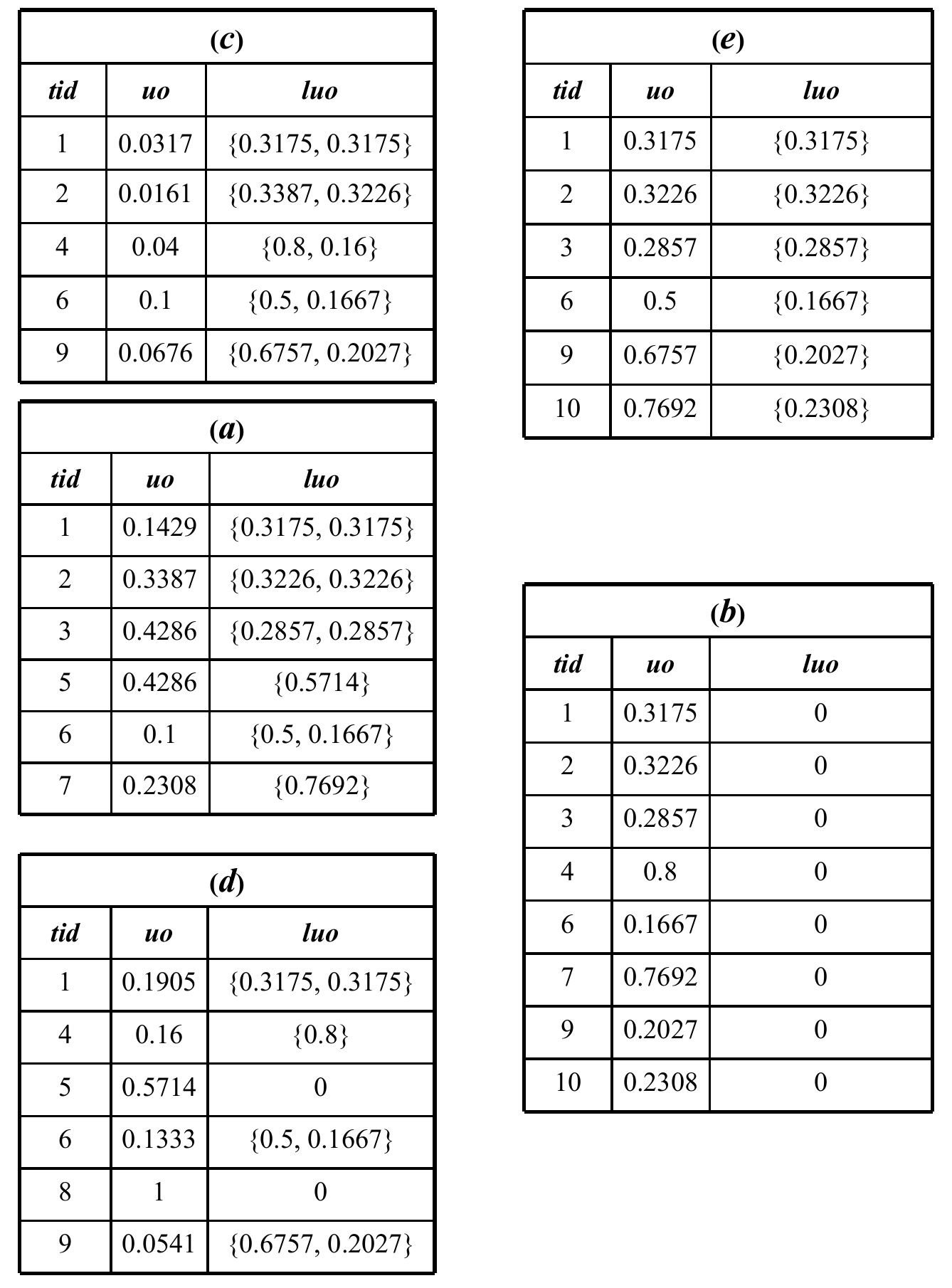}
	\caption{Constructed UO-nlists of each item in Table \ref{table:db}.}
	\label{fig:UO-nlist}
\end{figure}

Although UO-nlist already contains all necessary information for mining flexible HUOPs, it is troublesome to recalculate the elements on the list when we make use of the support, the utility-occupancy, and revised remaining utility-occupancy of a pattern. In this case, the execution time and other algorithm performance may be compromised. To remedy this problem, we further designed a data structure called a FUO-table, which is defined as follows:

\begin{definition}
	\label{def_17}
	\rm \textit{(Frequency-utility-occupancy table, FUO-table)} The FUO-table of a pattern $X$ consists of three elements, which are extracted from the corresponding UO-nlist. Among them, the support \textit{sup} is related to the number of tuples, $uo$ is the average utility-occupancy of $X$ on the UO-nlist, and \textit{rruo} is equal to the average of all values \textit{luo} defined in Definition \ref{def_15}. 
\end{definition}

To understand the concept of a FUO-table, take the construction process of the FUO-table of $c$ as an example, as displayed in Fig. \ref{fig:FWTableOfC}. Because $c$ in Fig. \ref{fig:UO-nlist} appears in five transactions, \textit{sup} is equal to 5. Here, $uo$ of $c$ in FUO-table is (0.0317 + 0.0161 + 0.04 + 0.1 + 0.0676)/5 = 0.05108. Then, the calculation process of \textit{rruo} is (0.3175 + 0.3175 + 0.3387 + 0.3226 + 0.8 + 0.16 + 0.5 + 0.1667 + 0.6757 + 0.2027)/5 = 0.76028. In addition, the FUO-tables of each item are shown in Fig. \ref{fig:FUO-table}.

\begin{figure}[!htbp]
	\centering
	\includegraphics[scale=0.55]{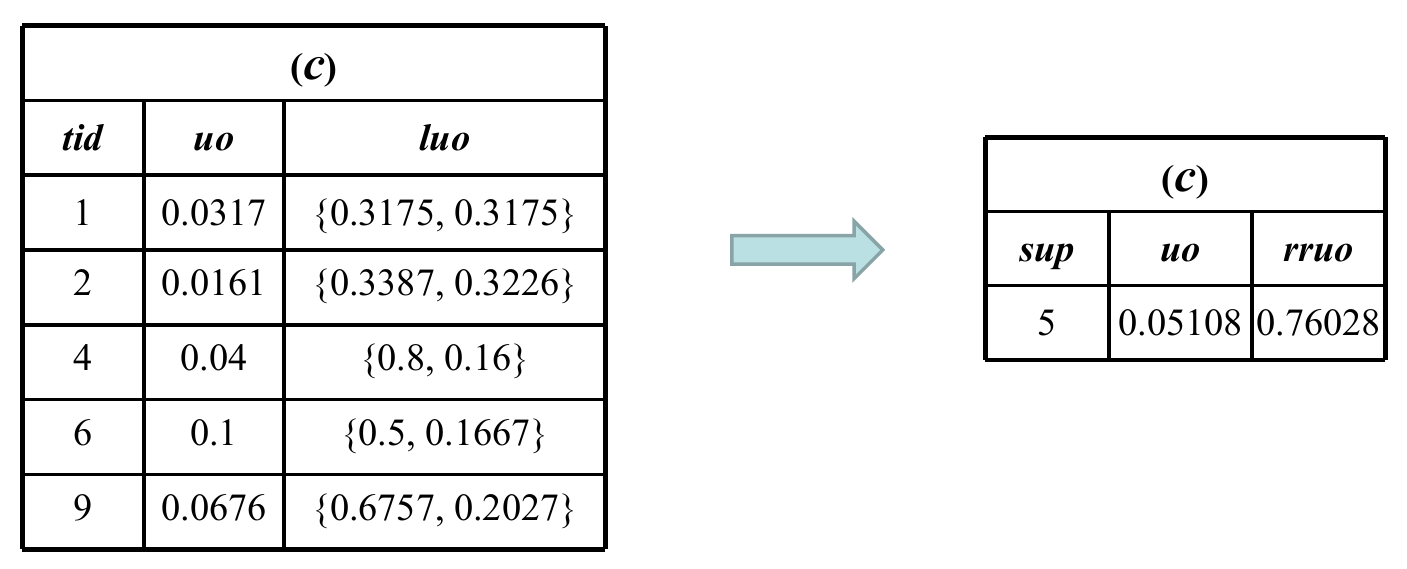}
	\caption{UO-nlist and FUO-table of item (c).}
	\label{fig:FWTableOfC}
\end{figure}

\begin{figure}[!htbp]
	\centering
	\includegraphics[scale=0.64]{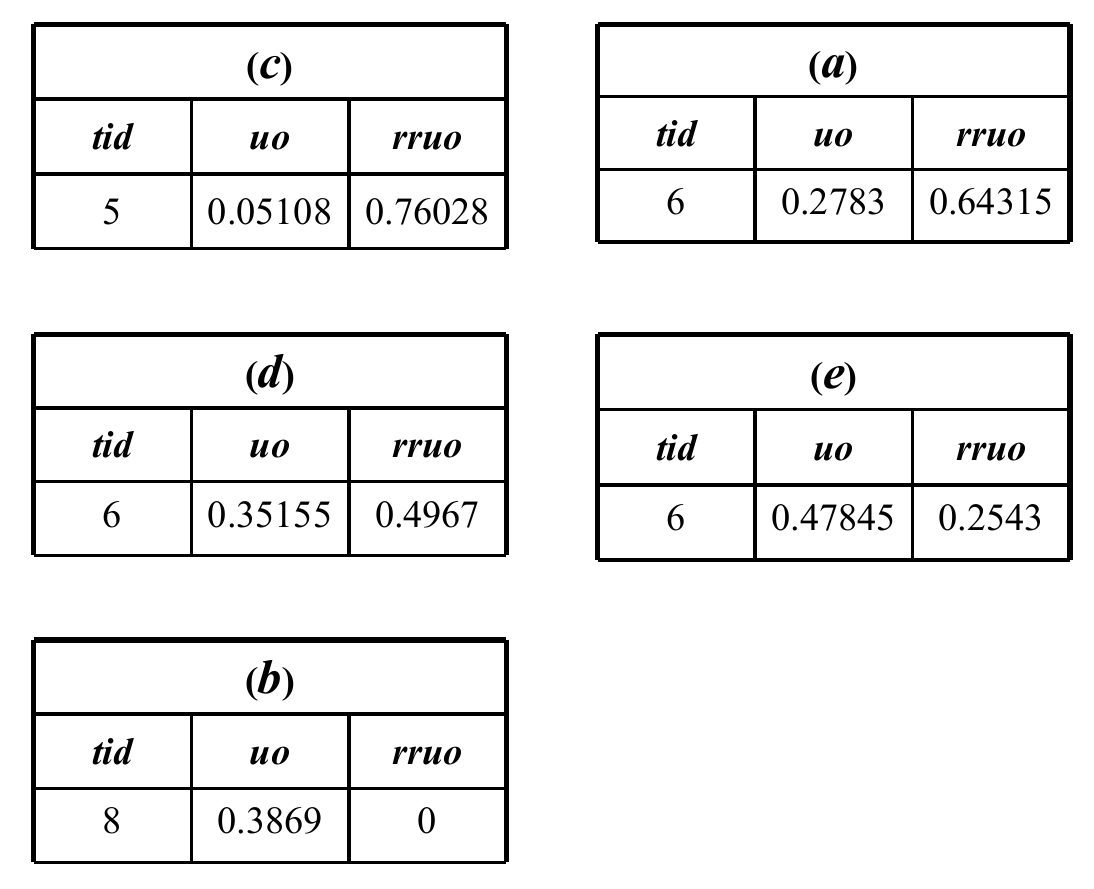}
	\caption{Constructed FUO-tables of all items.}
	\label{fig:FUO-table}
\end{figure}

Because promising or potential HUOPs are bound to meet the condition of frequent patterns, the initial UO-nlists and FUO-tables of frequent 1-itemsets are constructed after scanning the database twice. For the construction of the storage container of $k$-itemsets ($k > 1$), it is recommended to extend the already established subsets instead of scanning the database multiple times in a similar miner. Next, Algorithm 1 describes how to make use of subsets with the same prefix to calculate their extensions. To obtain the pattern $X_{ab}$, $X$ and its two extensions $X_a$ and $X_b$ ($a \prec b$) are applied in advance.  For simplicity, we address the UO-nlist of $ X_{ab}$ as \textit{X$_{ab}$.UONL} and FUO-table of $ X_{ab} $ as \textit{X$_{ab}$.FUOT}. Lines 2 and 3 and lines 18–22 in the algorithm check whether $X_a$ and $X_b$ appear in an entry. At the same time, each time an entry is checked, the support of $X_a$ is reduced by 1 until the value is less than $\alpha \times |D|$. The inherent reason for this is that when the support of the extensions of a pattern cannot satisfy the minimum support threshold, then it cannot be a HUOP, which is a necessary but insufficient condition.  Lines 4–17 illustrate the generation process of $X_{ab}$ when the construction conditions are met. If $X$ is an empty set, then the utility-occupancy of $X_{ab}$ is the total value of $X_a$ and $X_b$. Otherwise, the result is equal to the sum of the utility-occupancy of $X_a$ and $X_b$, and the utility-occupancy of $X$ is subtracted from $X$ because it is taken twice in this calculation. Moreover, the \textit{luo} of $X_{ab}$ is the same as $X_b$ owing to the latter order. In addition, the \textit{rruo} of $X_{ab}$ is the sum of the value in each $luo$, which is then divided by the support of $X_{ab}$. Thus far, the approach of building $(k+1)$-itemsets from $k$-itemsets has been realized.

%%%%%%%%%%%     algorithm 1    %%%%%%%%%%%%%%%
\begin{algorithm}
	\label{Construction}
	\caption{Construction($ X $, $ X_{a} $, $ X_{b} $)}
	\begin{algorithmic}[1]
		\REQUIRE $X$, a pattern with its corresponding \textit{UO-nlist} and \textit{FUO-table}; $ X_{a} $, an extension of $X$ with an item $a$; $ X_{b} $, an extension of $X$ with an item $b$.
		\ENSURE	$ X_{ab} $.
		
		\STATE set \textit{X$_{ab}$.UONL} $\leftarrow$ $\emptyset $, $ X_{ab}.\textit{FUOT}$ $\leftarrow$ $\emptyset $;
		\STATE set \textit{supUB} = \textit{X$_{a}$.PFU.sup};
		\FOR {each tuple $ E_{a}$ $\in$ \textit{X$_{a}$.UONL} }	
		\IF {$ \exists E_{a}$ $\in X_{b}.UONL$ $\wedge$ $E_{a}.tid$ == $E_{b}.tid $}
		\IF{\textit{X.UONL} $\neq$ $\emptyset $}			
		\STATE search for $ E \in$ \textit{X.UONL}, $E.tid$ = $E_{a}.tid $;			
		
	    \STATE $E_{ab}$ $\leftarrow$ $<$$E_{a}.tid$, $E_{a}.uo$ + $E_{b}.uo$ - $E.uo$, \textit{E$_{b}$.luo}$>$;
	    \STATE \textit{$X_{ab}$.FUOT.uo} += $E_{a}.uo$ + $E_{b}.uo$ - $E.uo $;
		\ELSE
		\STATE $E_{ab}$ $\leftarrow$ $<$$E_{a}.tid$, $E_{a}.uo$ + $E_{b}.uo$, \textit{$E_{b}$.luo}$>$;
		\STATE \textit{$X_{ab}$.FUOT.uo}  += $ E_{a}.uo$ + $E_{b}.uo $;
		\ENDIF
		\FOR {each \textit{value} $\in$ \textit{E$_{ab}$.luo} }
		\STATE \textit{$ X_{ab}$.FUOT.rruo}  += \textit{value}; 
		\ENDFOR
		\STATE \textit{$X_{ab}$.UONL} $\leftarrow$  $X_{ab}.\textit{UONL}$ $\cup$ $E_{ab} $;
		\STATE \textit{X$_{ab}$.FUOT.sup} ++;	
		\ELSE		
		\STATE \textit{supUB} - -;
		\IF{\textit{supUB} $< \alpha \times |D| $}
		\STATE \textbf{return} \textit{null};	
		\ENDIF
		\ENDIF
		\ENDFOR
		\STATE $  X_{ab}.\textit{FUOT.uo}$ = $\dfrac{X_{ab}.\textit{FUOT.uo}}{X_{ab}.\textit{FUOT.sup}}$;
		\STATE $X_{ab}.\textit{FUOT.rruo}$ = $\dfrac{X_{ab}.\textit{FUOT.rruo}}{X_{ab}.FUOT.sup}$;
		\STATE \textbf{return} $ X_{ab} $
	\end{algorithmic}
\end{algorithm}
%%%%%%%%%%%     algorithm 1    %%%%%%%%%%%%%%%

\subsection{Length-based Upper-Bound on Utility-Occupancy}

To the best of our knowledge, utility and utility-occupancy do not yet have downward closure properties such as the frequency, which results in a significant challenge with regard to increasing the performance of the algorithm and pruning the search space. Although Gan \textit{et al.} \cite{gan2019huopm} have previously designed an upper-bound on the utility-occupancy in the HUOPM algorithm, this upper-bound is relatively superficial for patterns with a length constraint. Thus, to achieve a tighter upper-bound, we propose the LUB strategy.

\begin{definition}
	\label{def_18}
	\rm \textit{(support count tree, $SC$-tree)}
	In this study, the support ascending order on items is applied to the complete algorithm and marked as $\prec$. For convenience, a set-enumeration tree called a $SC$-tree with an order of $\prec$ is built to simulate the depth-first search path.
\end{definition}

For a clearer description, refer to \cite{gan2019huopm}. For example, the extension nodes of $(ea)$ consist of (\textit{eab}), (\textit{ead}), and (\textit{eac}). If no action is taken, all possible nodes in a $SC$-tree are to be visited and their relative UO-nlists and FUO-tables are to be constructed, resulting in a massive space being occupied.  Suppose the existence of a pattern $X$ in transaction $T_q$, namely, $X \subseteq T_q$. After the database is revised in the order of $\prec$, all items that appear after $X$ in the transaction can be written as $T/X$. Inspired by the HUOPM algorithm \cite{gan2019huopm}, we propose the following lemmas.

\begin{lemma}
	\label{lemma_1}
	\rm Let $Y$ be an extension node of $X$. Here, $ \varGamma_X $ is a collection of transactions containing $X$, and $|\mathcal{D}|$ is the size of the given database. In addition, \textit{top} and $\downarrow$ imply the descending order of the sum of $uo$ and \textit{rruo} in each tuple and then take the top $\alpha$ $\times$ $|\mathcal{D}|$ as valid values for the numerator. The upper bound on the utility-occupancy of $Y$ can be calculated as follows:
	\begin{equation}
	\hat{\phi}(Y)) = \dfrac{\sum_{top \alpha \times |\mathcal{D}|, T_q \in \varGamma_X}\{uo(X,T_q) + rruo(X,T_q)\}^{\downarrow}}{|\alpha \times |\mathcal{D}||}.
	\end{equation}
\end{lemma}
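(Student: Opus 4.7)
The plan is to prove $uo(Y) \leq \hat{\phi}(Y)$ in two stages: first a per-transaction upper bound on $uo(Y, T_q)$, then an aggregate bound obtained by maximizing over the worst-case choice of $\varGamma_Y \subseteq \varGamma_X$. For the first stage, since $Y$ is an extension node of $X$ in the $SC$-tree, I can write $Y = X \cup Z$ where $Z$ consists of items appearing after $X$ in the support-ascending order $\prec$, with $|Z| \leq \textit{maxlen} - |X|$ by the length constraint. For any transaction $T_q \in \varGamma_Y$, the utility-occupancy decomposes as $uo(Y, T_q) = uo(X, T_q) + \sum_{i \in Z} uo(i, T_q)$. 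By Definition~\ref{def_15}, $\textit{rruo}(X, T_q)$ is exactly the sum of the $\textit{maxlen} - |X|$ largest utility-occupancy values among items of $T_q$ coming after $X$ in $\prec$, so $\sum_{i \in Z} uo(i, T_q) \leq \textit{rruo}(X, T_q)$, which yields
\begin{equation*}
uo(Y, T_q) \leq uo(X, T_q) + \textit{rruo}(X, T_q).
\end{equation*}

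For the aggregate step, $Y \supseteq X$ implies $\varGamma_Y \subseteq \varGamma_X$, and since any candidate HUOP must be frequent, $|\varGamma_Y| \geq \alpha \times |\mathcal{D}|$. Averaging the per-transaction inequality over $\varGamma_Y$ gives
\begin{equation*}
uo(Y) \leq \frac{\sum_{T_q \in \varGamma_Y}\{uo(X, T_q) + \textit{rruo}(X, T_q)\}}{|\varGamma_Y|}.
\end{equation*}
Setting $v_{T_q} = uo(X, T_q) + \textit{rruo}(X, T_q)$ for $T_q \in \varGamma_X$ and sorting these in descending order as $v_1 \geq v_2 \geq \cdots$, the right-hand side is largest when $\varGamma_Y$ picks the top $|\varGamma_Y|$ values of the sequence. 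The standard monotonicity of top-$k$ averages (if $n \geq m$ then $(v_1 + \cdots + v_n)/n \leq (v_1 + \cdots + v_m)/m$, since each term $v_k$ with $k > m$ is at most the running average of $v_1, \ldots, v_m$) then bounds the top-$|\varGamma_Y|$ average by the top-$\alpha|\mathcal{D}|$ average, which is exactly $\hat{\phi}(Y)$.

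I expect the top-$k$-averaging step to be the main point to state carefully, because at pruning time $|\varGamma_Y|$ is unknown and the bound must hold uniformly for every feasible size at least $\alpha \times |\mathcal{D}|$; the remaining pieces are essentially an unpacking of Definition~\ref{def_15} together with the $SC$-tree extension rule. Property~\ref{pro_1} is not needed inside the proof itself, but it is what subsequently guarantees that the resulting LUB bound $\hat{\phi}(Y)$ is strictly tighter than the non-length-aware remaining-utility-occupancy bound employed in the original HUOPM algorithm, so the LUB strategy prunes strictly more search space.
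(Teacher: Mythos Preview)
Your proposal is correct and follows essentially the same two-stage route as the paper: a per-transaction bound $uo(Y,T_q)\le uo(X,T_q)+\textit{rruo}(X,T_q)$ obtained from the length constraint via Definitions~\ref{def_14}--\ref{def_15}, followed by the aggregate step that replaces the unknown $|\varGamma_Y|$ by the worst-case top-$\alpha|\mathcal{D}|$ average over $\varGamma_X$. Your explicit justification of the top-$k$ average monotonicity is in fact more careful than the paper's, which merely asserts that taking the top $\alpha\times|\mathcal{D}|$ values ``is appropriate'' without spelling out why the resulting average dominates the one over any larger $\varGamma_Y$.
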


\begin{proof}
    \label{pro_2}
    \rm Because $Y$ is obtained by appending items behind $X$, the equation $Y$/$X$ $\subseteq$ $T_q$/$X$ is derived. We can then achieve $ \sum{uo(Y / X, T_q)}$ $\leq$ $\sum{uo(T_q / X, T_q)} $. In addition, because the length of the required HUOPs cannot exceed \textit{maxlen}, the largest utility-occupancy  and revised remaining utility-occupancy in a transaction are applied to handle a tighter upper bound. For an easier formula derivation, we write $Y \subseteq T_q \wedge T_q \in \mathcal{D}$ as $Z$ and \textit{maxlen} as $M$.
	\begin{tabbing}	
		$ uo(Y)$ \=
		$= \dfrac{\sum_{Z \wedge |Y| \leq M}uo(Y,T_q)}{|\varGamma_Y|} $\\
		\>$ = \dfrac{\sum_{Z\wedge |Y/X| \leq M - |X|}(uo(X,T_q) + uo(Y/X,T_q))}{|\varGamma_Y|} $\\
	    \>$ \leq \dfrac{\sum_{Z \wedge |T_q/X| \leq M - |X|}(uo(X,T_q) + uo(T_q/X,T_q))}{|\varGamma_Y|} $\\
	
		\>$ \leq \dfrac{\sum_{Z}\{uo(X,T_q) + \sum_{v \in luo(X, T_q)}(v)\}}{|\varGamma_Y|} $\\ 
		\>$ \leq \dfrac{\sum_{Z}\{uo(X,T_q) + \sum_{v \in luo(X, T_q)}(v)\}}{|\varGamma_Y|} $\\ 
		\>$ = \dfrac{\sum_{Z}\{uo(X,T_q) + rruo(X, T_q)\}}{|\varGamma_Y|} $\\ 
		$\Longrightarrow  uo(Y) \leq	\dfrac{\sum_{Z}\{uo(X,T_q) + rruo(X, T_q)\}}{|\varGamma_Y|} $.
	\end{tabbing}

As the most critical step in the derivation process above, $uo(T_q/X$, $T_q)$ is less than the sum of all values in $luo(X$, $T_q)$ because the \textit{maxlen} - $|X|$ values in $luo(X$, $T_q)$ are definitely greater than the other utility-occupancy of items behind $X$.

We place all supporting transactions of $Y$ extending $X$ into $\varGamma_Y$; however, the value of $|\varGamma_Y|$ cannot be obtained until the entire UO-nlist of $Y$ has been constructed. Therefore, it is meaningless and contradictory to consider the above formula as the upper bound of the utility-occupancy of $Y$. Our purpose here is to minimize the construction of the list structures; however, this formula requires constructing a completed UO-nlist. In addition, when judging the high utility-occupancy pattern, we must first determine whether it is a frequent pattern, and if so, continue to judge whether the utility-occupancy meets the requirements. In addition, inequalities $\alpha \times |\mathcal{D}| \leq SC(Y) \leq SC(X)$ hold. Therefore, according to the discussion above, it is appropriate to replace the sum of $uo$ and \textit{rruo} in complete supporting transactions with the sum in the top $\alpha \times |\mathcal{D}|$ supporting transactions. The above formula can be further transformed to a tighter upper-bound by using the following property.

    \begin{tabbing}	
    	$uo(Y) \leq	\dfrac{\sum_{Z}\{uo(X,T_q) + \textit{rruo}(X, T_q)\}}{|\varGamma_Y|} $.\\
    	$  uo(Y) \leq	\dfrac{\sum_{top \alpha \times |\mathcal{D}|, T_q \in \varGamma_X}\{uo(X,T_q) + \textit{rruo}(X,T_q)\}^{\downarrow}}{|\alpha \times |\mathcal{D}||} $\\
    	$\Longrightarrow  uo(Y) \leq	\hat{\phi}(Y) $.
	\end{tabbing}
\end{proof}

Thus, given a pattern $X$, we can calculate the upper bounds $\hat{\phi}(Y)$ regarding the utility-occupancy with the length constraints of the nodes of all subtrees rooted at $X$.

\subsection{Pruning Strategies and Proposed Algorithm }

This section states the pruning strategies designed to prune the search space and improve the performance of the algorithm. Moreover, the proposed algorithm is also described in detail.

\begin{strategy}
	\label{stra_1}
	If the support count of a pattern $X$ in the designed $SC$-tree is greater than the minimum support threshold $\alpha$ multiplied by the size of the database $|\mathcal{D}|$, then this node is a frequent pattern; otherwise, this node and its subtree nodes rooted at the node can be directly pruned.
\end{strategy}

\begin{proof}
	The source of the above strategy is the Apriori \cite{agrawal1994fast} algorithm, the principle of which can generally be expressed as $SC(X_{k+1}) \leq SC(X_{k})$. Provided that $SC(X_{k})$ $\textless \alpha$ $\times$ $|\mathcal{D}|$, $SC(X_{k+1})$ $\textless \alpha$ $\times$ $|\mathcal{D}|$ is then acquired. Thus, node $X_{k}$ and the subtree rooted at this node can be trimmed immediately.
\end{proof}

\begin{strategy}
	\label{stra_2}
	In $SC$-tree, the upper bound of the child node $Y$ of node $X$ can be calculated immediately when the UO-nlist and FUO-table corresponding to $X$ have been constructed. If the derived upper bound is less than the predefined minimum utility-occupancy threshold $\beta$, then any nodes in the subtree of $X$ should be pruned.
\end{strategy}

\begin{proof}
	We concluded that the upper bound of $Y$ is certainly less than the real utility-occupancy of $Y$ in Lemma \ref{lemma_1}. Thus, if the upper bound is less than the minimum utility-occupancy threshold $\beta$, then $Y$ is not a HUOP. 
\end{proof}

\begin{strategy}
	\label{stra_3}
    Similar to the downward closure property in Strategy \ref{stra_1}, each time a tuple in $X_a$ is scanned, the support of $X_a$ decreases by one until the support is less than $\alpha \times |\mathcal{D}|$.
\end{strategy}

\begin{proof}
	The proof is the same as that of strategy \ref{stra_1}, only further limiting the support. 
\end{proof}

\begin{algorithm}
	\label{HUOPM$^+$-algorithm}
	\caption{HUOPM$^{+}$($\mathcal{D}$, \textit{ptable}, $\alpha$, $\beta$, \textit{minlen}, \textit{maxlen})}
	\begin{algorithmic}[1]		
		 \REQUIRE a transaction database $\mathcal{D}$, utility table \textit{ptable}, the minimum support threshold $ \alpha $, the minimum utility-occupancy threshold $\beta$, the minimum length \textit{minlen}, and the maximum length \textit{maxlen}.
		 \ENSURE  \textit{HUOPs}.
		 
		\STATE scan $\mathcal{D}$ to calculate the $SC(i)$ of each item $ i \in I $ and the $tu$ value of each transaction;
		\STATE find $ I^* \gets \left\{  i \in I | SC(i) \geq \alpha \times |\mathcal{D}| \right\} $, with respect to $ FP^1 $;
		\STATE sort $ I^* $ in the designed total order $ \prec $;
		\STATE using the total order $ \prec $, scan $\mathcal{D}$ once to build the UO-nlist and FUO-table for each 1-item $ i \in I^*$;
		\IF{\textit{maxlen} $\geq$ 1}
		\STATE \textbf{call \textit{HUOP$^+$-Search}}($\phi$, $I^*$, $\alpha$, $\beta$, \textit{minlen}, \textit{maxlen}).		
		\ENDIF
		\STATE \textbf{return} \textit{HUOPs}		
	\end{algorithmic}
\end{algorithm}

We introduced three feasible pruning strategies above. Next, the overall description concerning the proposed algorithm is presented as follows.

\begin{algorithm}
	\label{HUOP$^+$-Search procedure}
	\caption{HUOP$^+$-Search($X$, \textit{extenOfX}, $\alpha$, $\beta$, \textit{minlen}, \textit{maxlen})}
	\begin{algorithmic}[1]			
		\FOR {each itemset $ X_{a}\in $ \textit{extenOfX}}	
		\STATE obtain $ SC(X_a) $ and $ uo(X_a) $ from the $ X_{a}.\textit{FUOT} $;
		\IF{$ SC(X_a)$ $\geq \alpha$ $ \times |\mathcal{D}| $}
		
		\IF{$ uo(X_a)$ $\geq \beta \wedge$ $|X_a|\geq minlen$ }			 
		\STATE  \textit{HUOPs} $\leftarrow$ \textit{HUOPs} $\cup$ $X_{a} $;	
		\ENDIF
		
		\STATE	$ \hat{\phi}(X_a) \leftarrow $ \textbf{\textit{LengthUpperBound}}($X_a.UONL, \alpha) $;
		\IF{$ \hat{\phi}(X_a) \geq \beta $}
		
		\STATE \textit{extenOfX}$_{a}\leftarrow  \emptyset $;
		\FOR {each $ X_{b}$ $\in$ \textit{extenOfX} that $ X_{a} $   $ \prec $  $ X_{b} $}
		
		\STATE $ X_{ab}$ $\leftarrow$ $X_{a} \cup X_{b} $;
		
		\STATE call \textbf{\textit{Construction}}($X, X_{a}, X_{b}) $;
		\IF{$ X_{ab}.\textit{UONL}$ $\not= $ $\emptyset$ $\wedge SC(X_{ab})$ $\geq \alpha$ $\times |\mathcal{D}| $}
		
		\STATE \textit{extenOfX}$_{a} \leftarrow$ $\textit{extenOfX}_{a} \cup X_{ab} $;		
		\ENDIF			
		\ENDFOR	
		\IF {$|X| + 2 \leq maxlen$}	 						  		
		\STATE \textbf{call \textit{HUOP$^+$-Search}}$\boldmath{(X_{a}, \textit{extenOfX}_{a}, \alpha, \beta)}$;	
		\ENDIF
		\ENDIF
		\ENDIF
		\ENDFOR
		\STATE \textbf{return} \textit{HUOPs}
	\end{algorithmic}
\end{algorithm}

%%%%%%%%%%%%%%%%%   Algorithm 4   %%%%%%%%%%%%%%%%%%%
\begin{algorithm}
	\label{LengthUpperBound procedure}
	\caption{LengthUpperBound(\textit{X$_q$.UONL}, $ \alpha $)}
	\begin{algorithmic}[1]				
		\STATE \textit{sumTopK} $\leftarrow 0$, $\hat{\phi}(X_a)$ $\leftarrow 0$, $V_{occu}$ $\leftarrow$ $\emptyset $;
		\STATE calculate $ (uo(X,T_q)$ + \textit{rruo}($X,T_q$)) of each tuple from the built \textit{X$_{a}$.UONL}  and put them into the set of $ V_{occu} $;
		\STATE sort $ V_{occu} $ by descending order as $ V_{occu}^{\downarrow} $;
		
		\FOR {$ k \leftarrow $ 1 to  $\alpha \times |\mathcal{D}| $ in $ V_{occu}^{\downarrow} $}		
		\STATE \textit{sumTopK} $ \leftarrow$  \textit{sumTopK} +  $V_{occu}^{\downarrow}[k] $;					
		\ENDFOR
		
		\STATE $ \hat{\phi}(X_a)$ = $\dfrac{\textit{sumTopK}}{\alpha \times |\mathcal{D}|} $.	
		
		\STATE \textbf{return} $ \hat{\phi}(X_a) $
	\end{algorithmic}
\end{algorithm}
%%%%%%%%%%%%%%%%%   Algorithm 4   %%%%%%%%%%%%%%%%%%%

The main objective of the HUOPM$^+$ algorithm is to mine flexible high utility-occupancy patterns whose length is within a certain range. The pseudocode for the entire algorithm is shown in Algorithm 2. First, six parameters need to be entered in advance as the input of the algorithm, i.e., a transaction database $\mathcal{D}$, a profit table abbreviated as \textit{ptable}, the minimum support threshold $\alpha$ (0 $< \alpha \leq 1)$, the minimum utility-occupancy threshold $\beta$ (0 $< \beta \leq 1)$, the minimum length of the patterns \textit{minlen}, and the maximum length of the patterns \textit{maxlen} $(0 \leq $ \textit{minlen} $\leq $ \textit{maxlen}). This is the first time to scan the database to calculate the support of each item and the utility of each transaction. Then, rearrange the items in the transactions in support descending order ($\prec$) and thus obtain a revised database. Next, a rescanning of the revised database and establishment of the initial UO-nlist and FUO-table are necessary, which are the bases for the following search process. Finally, if the required maximum length \textit{maxlen} is greater than 1, the search procedure is executed.

To allow the search process to proceed in Algorithm 3, we need to input a prefix pattern $X$, a collection of extensions of $X$ that are initially distinct items, i.e., \textit{extenOfX}, $\alpha$, $\beta$, \textit{minlen}, and \textit{maxlen}. If a pattern $X_a$ in \textit{extenOfX} is expected to be output as a HUOP, it must satisfy three conditions. The support and utility-occupancy of $X_a$ are no less than the threshold $\alpha$ and $\beta$, respectively. In addition, it is still necessary to detect whether the length of $X_a$ is greater than or equal to \textit{minlen}. If the support of $X_a$ does not meet the restriction, then it is directly skipped, and the next pattern in the sequence is checked. Next, the upper bound in Algorithm 4 is calculated to determine whether the extending nodes of $X_a$ should be executed during a later calculation. If the upper bound $ \hat{\phi}(X_a) $ is no less than $\beta$, we can establish the extensions of $X_a$ and their relative UO-nlists and FUO-tables through the construction procedure described in Algorithm 1. For example, $X_{ab}$ is developed by merging $X_a$ with $X_b$ and using the specific process described above. Next, if the support of $X_{ab}$ meets the given requirement, it is placed in \textit{extenOfX}$_a$ for the next cycle. Finally, when the length of the extensions is less than \textit{maxlen}, the above mining program is recursed; otherwise, the program exits.

%%%%%%%%%%%%%%%%%%%%%%%%%%%%%%%%%%%%%%%%%%%
%%%%%%%%%%%%%%%%%%%% EXPERIMENT %%%%%%%%%%%%%%%
%%%%%%%%%%%%%%%%%%%%%%%%%%%%%%%%%%%%%%%%%%%
\section{Experiments} 
\label{sec:experiments}

To evaluate the performance of our proposed HUOPM$^+$ algorithm, some experiments conducted for comparison with the state-of-the-art utility-occupancy-based HUOPM algorithm are described in this section. 

\textit{Experimental environment}. First, an introduction of the computer parameters used will be quite helpful to allow readers to reproduce the experiments. The computer is running on Windows 10 with 7.88 GB of free RAM. We compared the presented algorithm HUOPM$^+$ with the state-of-the-art HUOPM algorithm for discovering the HUOPs. Both algorithms are implemented using Java.

\textit{Parameter settings}. As an innovation, the algorithm is mainly aimed at controlling the length of the output patterns; thus, to test this advantage, we set the maximum length of the patterns from one to five during the comparison between the HUOPM and HUOPM$^+$ algorithm outputs for all desired patterns. The minimum length is not a variable and is set to a constant of 1 because, even if the minimum length is greater than 1 and the maximum length is unchanged, the efficiency and performance of the algorithm are basically unchanged. Setting the minimum length ensures that our proposed algorithm can output patterns that are greater than the minimum length. The HUOPM$^+$ algorithm mainly involves two parameters, support and utility-occupancy. The following experiments analyze the runtime, visited nodes, and found patterns at different maximum lengths. For simplicity, we record the minimum support threshold as \textit{minsup} and the minimum utility-occupancy as \textit{minuo}. In the legend, the maximum length \textit{maxlen} in the HUOPM$^+$ algorithm is recorded as HUOPM$^{+}$-\textit{maxlen}. For example, HUOPM$^{+}$5 implies that the maximum length of the discovered patterns is 5.

\subsection{Tested Datasets}

Four standard datasets are applied to test the efficiency of the compared algorithms in terms of runtime, visited nodes, ie memory consumption, and patterns found. Three real-life datasets and one synthetic dataset were used, namely, retail, mushroom, kosarak, and T40I10D100K, respectively. It is worth noting that these datasets have different characteristics and that all aspects of the two algorithms (HUOPM and HUOPM$^{+}$) can be compared. We adopted a simulation method that is widely used in previous studies \cite{gan2019huopm,liu2012mining} to generate the quantitative and unit utility information for each object/item in each dataset. Next, we introduce these datasets in detail.

\begin{itemize}
	\item \textbf{retail\footnotemark[1]}: There are 88,162 transactions and 16,470 distinct items in the retail dataset, and the average transaction length is 76. Because there are few transactions and the average length of a transaction is not large, retail is a sparse dataset.

	\item \textbf{mushroom\footnotemark[1]}: This is a dense dataset, which has 8,124 transactions and 119 distinct items. 

	\item \textbf{kosarak\footnotemark[1]}: This has 990,002 transactions and 41,270 items, and note that the average length of its transactions is relatively longer, reaching up to 2,498, making it a dense dataset. 

	\item \textbf{T40I10D100K\footnotemark[2]}: This is a synthetic dataset with 942 distinct items and 100,000 transactions.
\end{itemize}

\footnotetext[1]{\url{http://www.philippe-fournier-viger.com/spmf/index.php?link=datasets.php}}
\footnotetext[2]{\url{http://fimi.uantwerpen.be/data/}}

\subsection{Runtime Analysis}

Let us first analyze the runtime of the compared algorithms. Figs. \ref{fig:runtimeMS} and \ref{fig:runtimeMU} show the runtime changes when the minimum support threshold $minsup$ and the minimum utility-occupancy threshold \textit{minuo} are gradually increasing. HUOPM$^{+}$1 to HUOPM$^{+}$5 mean that the maximum length of the patterns varies from 1 to 5. In general, it can be observed that, as one of the three thresholds increase, the execution time of each algorithm becomes increasingly short. In addition, when the maximum length of the discovered patterns is set to a distinct integer, the compared runtimes are clearly different. The runtime of the HUOPM algorithm can reach up to 2- to 5-times the runtime of HUOPM$^+$, which is crucial to the efficiency of the algorithms. As the reason for this phenomenon, the HUOPM$^+$ algorithm tightens the upper bound of the derived patterns, reduces the number of traversal nodes, and directly finds the derived patterns. From the comparison of the subplots in Figs. \ref{fig:runtimeMS} and \ref{fig:runtimeMU}, the performance of the proposed algorithm in compact datasets is better than that in sparse datasets, which is probably in that there are more longer patterns in compact datasets and plenty of patterns are pruned as a result of the designed utility-occupancy upper bound.

\begin{figure}[!hbt]
	\centering 
	\includegraphics[height=0.3\textheight,width=1.03 \linewidth,trim=80 0 50 0,clip,scale=0.35]{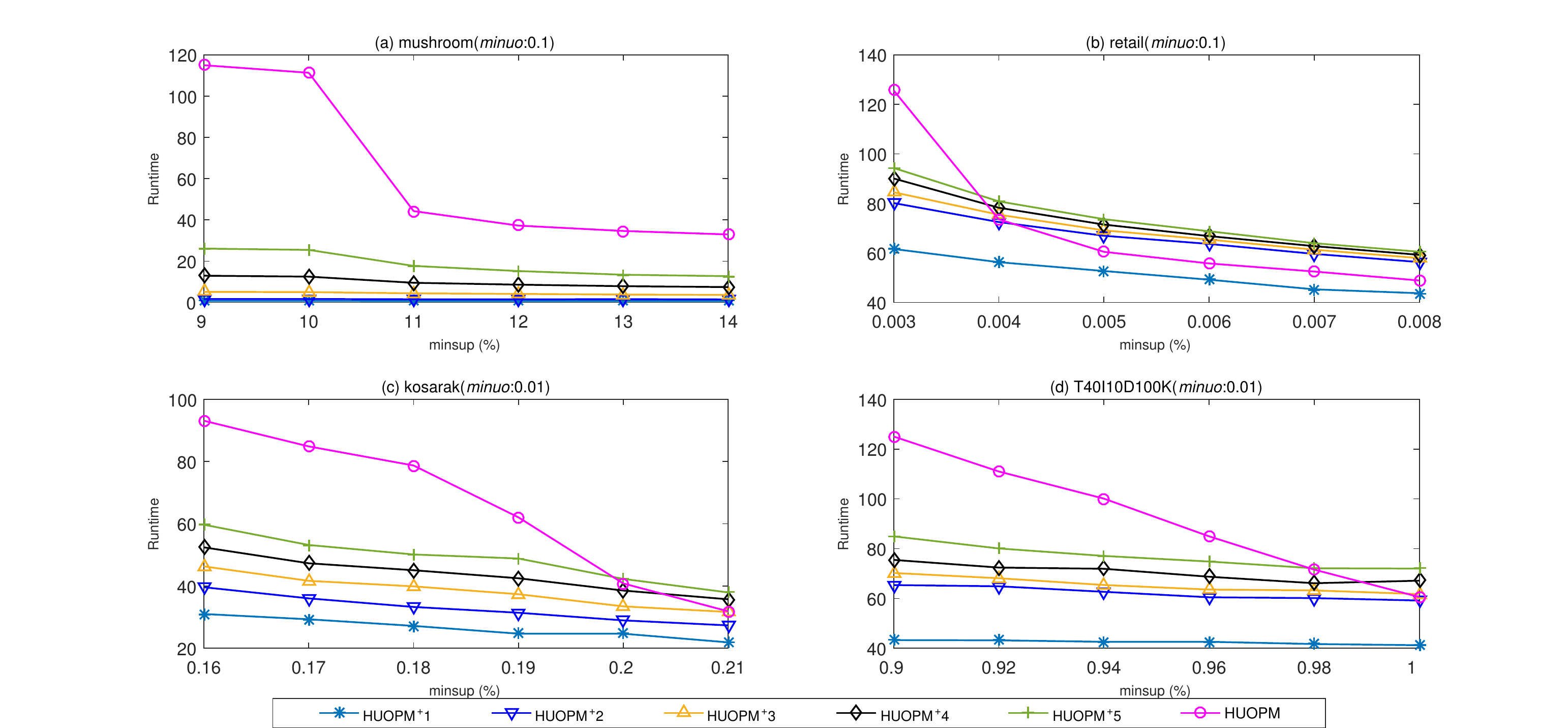}
	\caption{Runtime under a changed $minsup$ with a fixed $minuo$}
	\label{fig:runtimeMS}	
\end{figure}

\begin{figure}[!hbt]
	\centering 
	\includegraphics[height=0.3\textheight,width=1.03 \linewidth,trim=80 0 50 0,clip,scale=0.35]{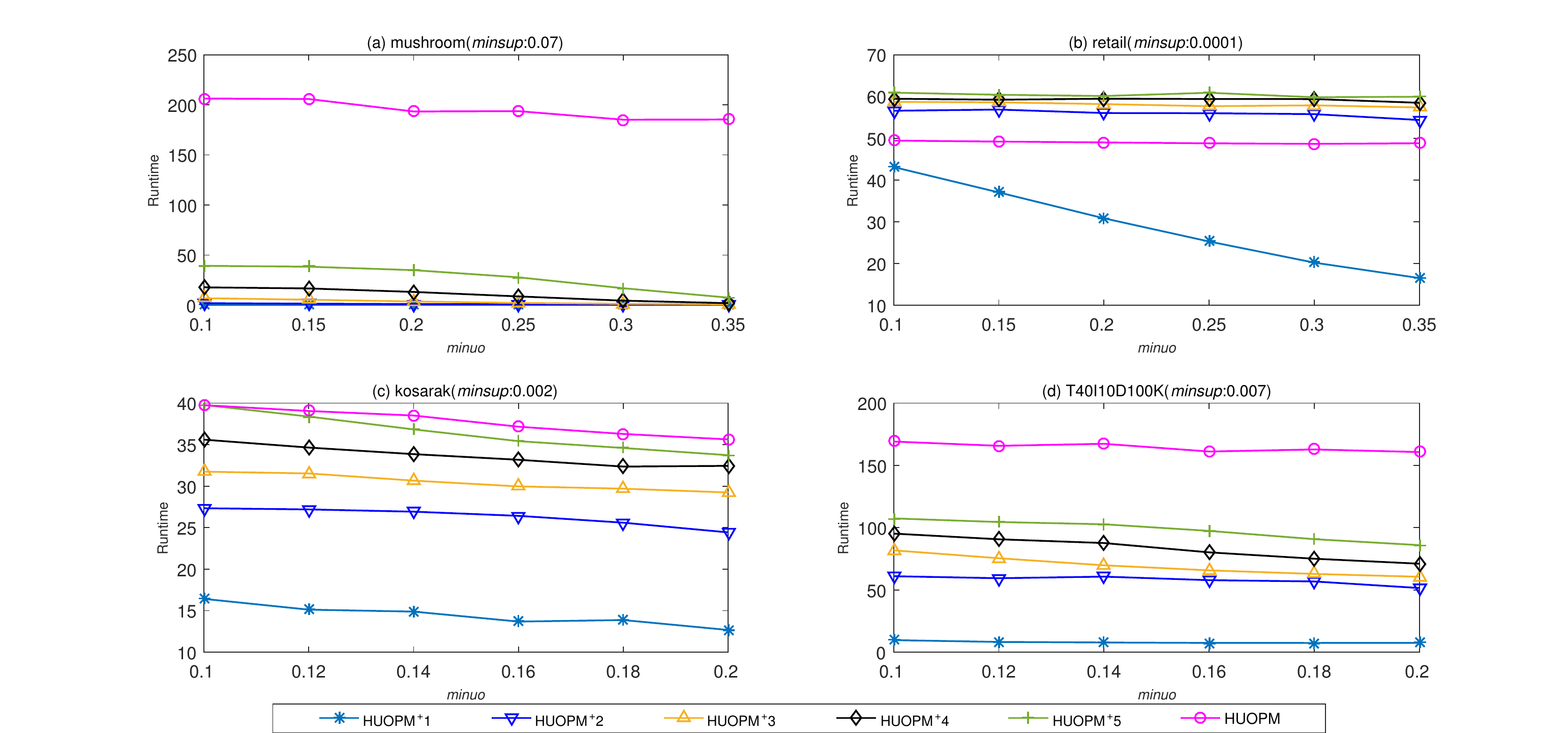}
	\caption{Runtime under a changed $minuo$ with a fixed $minsup$}
	\label{fig:runtimeMU}	
\end{figure}

\subsection{Visited Node Analysis}

In this subsection, we take into account the number of visited patterns, that is, we discuss the memory consumption. It is a common knowledge that, owing to the limitations of the current storage technology, memory consumption still accounts for a large proportion of the algorithm performance. Therefore, the fewer nodes visited during the process, the better the memory consumption of the algorithm.

\begin{figure}[!hbt]
	\centering 
	\includegraphics[height=0.3\textheight,width=1.05\linewidth,trim=80 0 50 0,clip,scale=0.35]{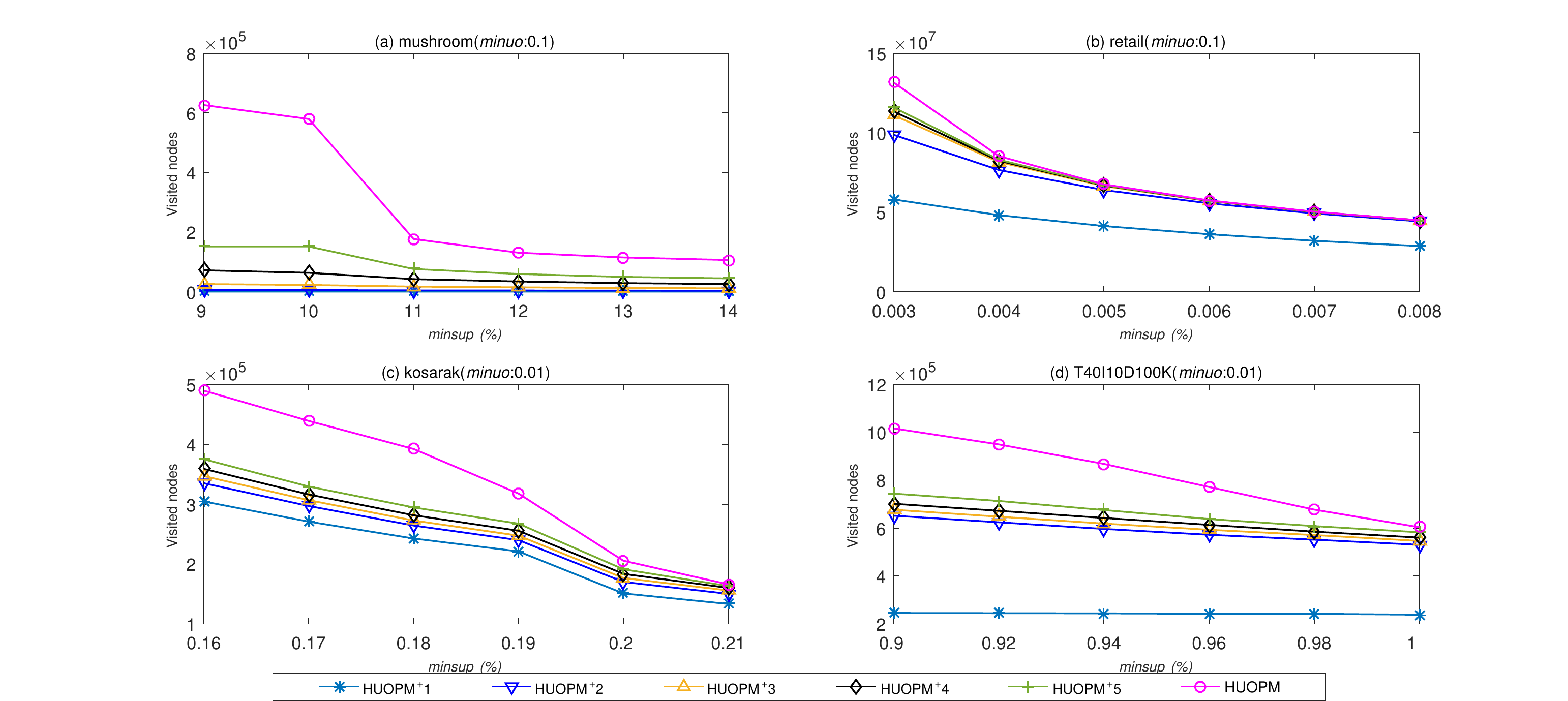}
	\caption{Visited nodes under a changed $minsup$ with a fixed $minuo$}
	\label{fig:memoryMS}	
\end{figure}

\begin{figure}[!hbt]
	\centering 
	\includegraphics[height=0.3\textheight,width=1.03\linewidth,trim=80 0 50 0,clip,scale=0.35]{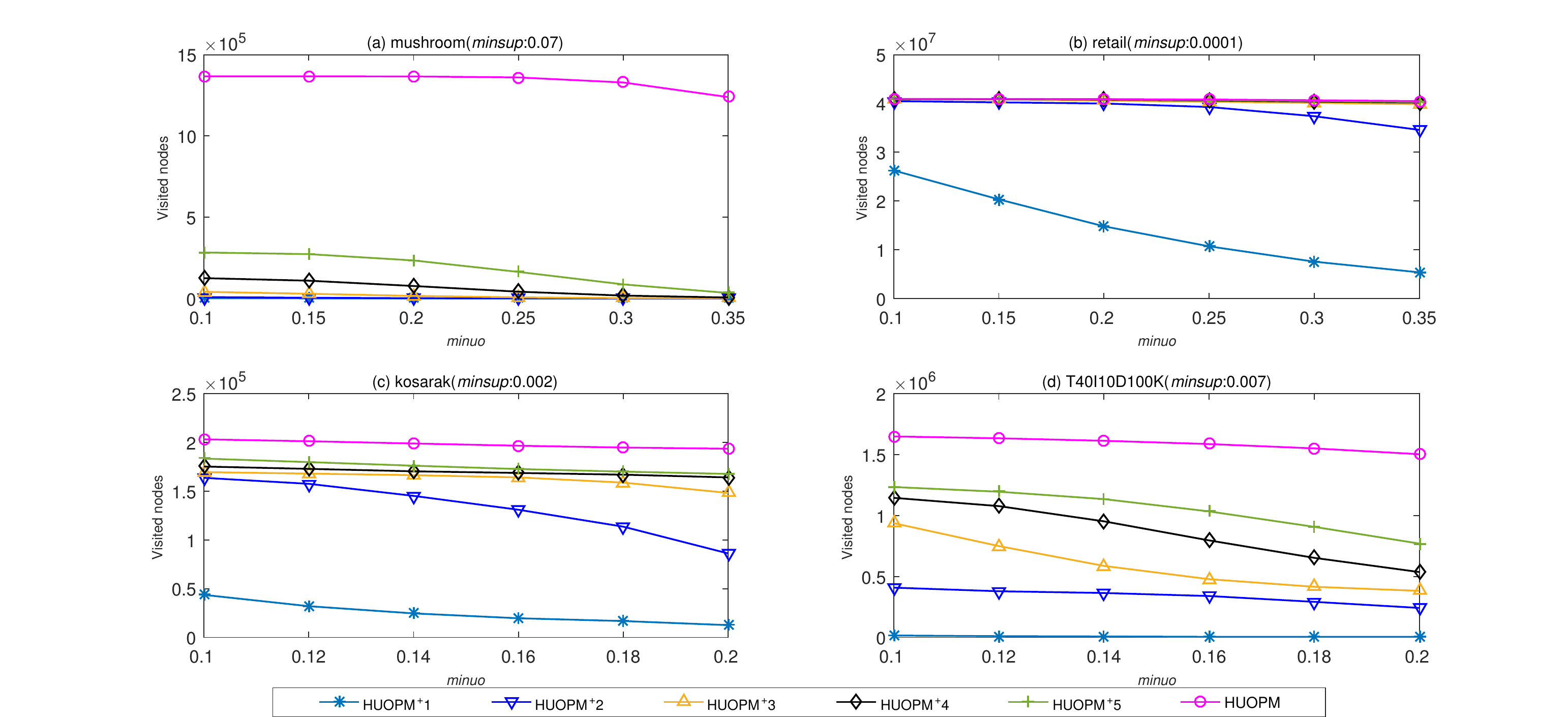}
	\caption{Visited nodes under a changed $minuo$ with a fixed $minsup$}
	\label{fig:memoryMU}	
\end{figure}

Figs. \ref{fig:memoryMS} and \ref{fig:memoryMU} respectively show that, when the support and utility-occupancy threshold change, the visited node changes. Clearly, the number of nodes visited by the HUOPM algorithm is greater than or equal to that of HUOPM$^+$. In particular, the smaller the maximum length set in HUOPM$^+$, the fewer the number of visited nodes. For example, in Fig. \ref{fig:memoryMS}, the minimum support of \textit{kosarak} is set to 0.0016, and the minimum utility-occupancy is set to 0.01. It can be seen that, when the maximum length is 5, the number of visited nodes of the HUOPM$^+$ algorithm is about one-third that of the HUOPM algorithm.

These phenomena suggest there is a difference between the actual memory consumption of the compared algorithms. It also illustrates the effectiveness of the mining model based on the pattern length-constraints proposed in this paper, that is, by setting different lengths, we not only can mine patterns that are more in line with the needs, we can also further reduce the actual memory consumption of the algorithm.

\subsection{Pattern Analysis}

The main goal of this study is the flexible mining of high utility-occupancy patterns, and during the process of mining to innovatively set the length-constraints. This allows patterns that are out of the range to not be traversed and the number of patterns found to be much less than that of HUOPM.

\begin{figure}[!hbt]
	\centering 
	\includegraphics[height=0.3\textheight,width=1.02\linewidth,trim=80 0 50 0,clip,scale=0.35]{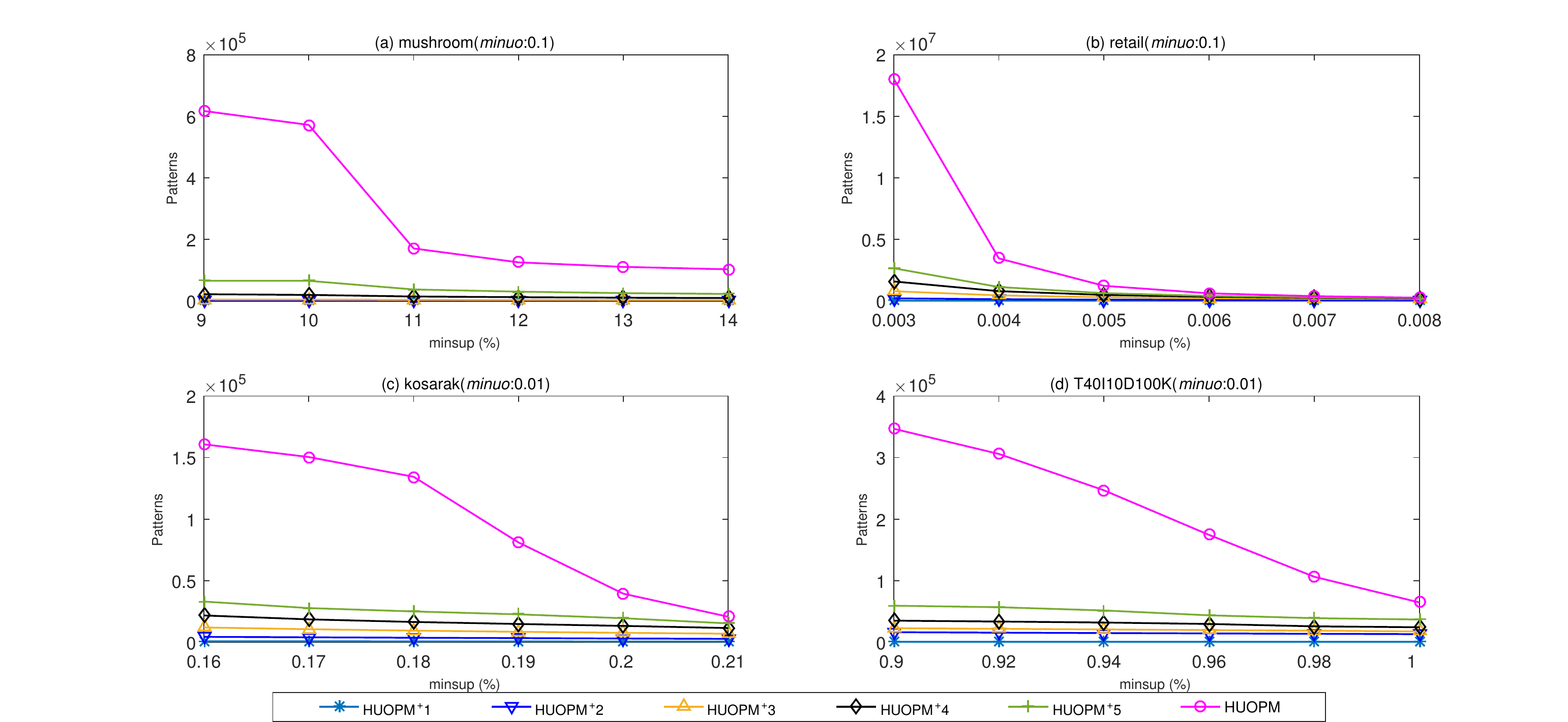}
	\caption{Patterns under a changed \textit{minsup} with a fixed \textit{minuo}}
	\label{fig:patternMS}	
\end{figure}

\begin{figure}[!hbt]
	\centering 
	\includegraphics[height=0.3\textheight,width=1.02\linewidth,trim=80 0 50 0,clip,scale=0.36]{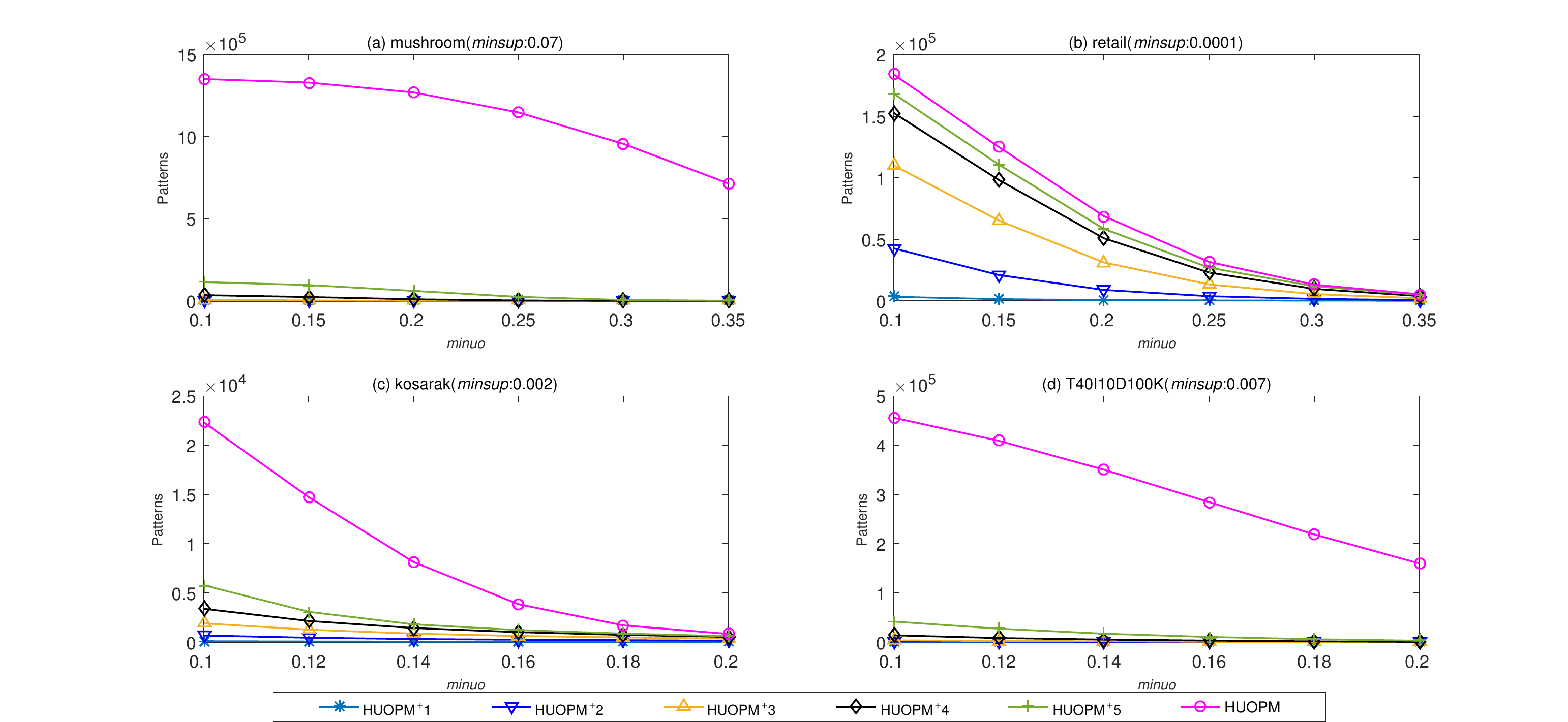}
	\caption{Patterns under a changed \textit{minuo} with a fixed \textit{minsup}}
	\label{fig:patternMU}	
\end{figure}

From the detailed results in Figs. \ref{fig:patternMS} and \ref{fig:patternMU}, we can observe that, in fact, the required patterns are often much fewer than the total number of patterns. Using this proposed algorithm, we not only can directly output the required patterns, we can also reduce the time required to process the data. For example, as shown in Fig. \ref{fig:patternMU} (b), $\alpha$ was set to 0.0001, and $\beta$ changed from 0.1 to 0.35 in increments of 0.05. The number of patterns desired from HUOPM$^+$1, HUOPM$^+$2, HUOPM$^+$3, HUOPM$^+$4, and HUOPM$^+$5 is 1,350, 20,946, 65,223, 98,166, and 110,676, respectively. In particular, the number of patterns of the mushroom dataset in Fig. \ref{fig:patternMU} with no length constraint is more than 10 times the number of patterns with \textit{maxlen} set to 5. From the two figures, we can notice that the number of flexibility patterns extracted by the proposed algorithm from sparse datasets is not significantly different from the number of the full patterns, while it is exactly the opposite in compact datasets. This is because the algorithm plays a stronger role in dense datasets.

\section{Conclusion and Future Studies} %  and Future Works
\label{sec:conclusion}

This paper proposes a novel algorithm called HUOPM$^+$, aimed at mining flexible high utility-occupancy patterns. It integrates length-constraints using a state-of-the-art HUOPM algorithm. In particular, unlike stopping the process of pattern growth directly, it deepens the length constraints of the procedure and narrows the upper bound by introducing the concept of a length upper-bound, which is one of the merits of the proposed algorithm. In addition, UO-nlist and FUO-table are designed to maintain the information in the database. The results of a subsequent experiment confirm that our proposed strategies can indeed discover HUOPs within a certain range of length, from \textit{minlen} to \textit{maxlen}, and greatly reduce the memory consumption and execution time. As part of a future study, we will apply the pattern length constraint to other utility mining algorithms, such as utility mining in dynamic profit databases \cite{nguyen2019mining}, utility-driven sequence mining \cite{gan2020proum} and nonoverlapping pattern mining \cite{wu2021ntp}.

% Can use something like this to put references on a page
% by themselves when using endfloat and the captionsoff option.
\ifCLASSOPTIONcaptionsoff
  \newpage
\fi

% references section
\bibliographystyle{IEEEtran}
% argument is your BibTeX string definitions and bibliography database(s)
\bibliography{HUOPMPlus}

% that's all folks
\end{document}